\newtheorem{obs}[lemma]{Observation}
\renewenvironment{proof}{\medskip\noindent{\bf Proof:}}{\mbox{}\hfill\qed\par}
\newenvironment{sketch}{\medskip\noindent{\bf Proof Sketch:}}{\mbox{}\hfill\qed\par}
\title{On Contact Graphs with Cubes and Proportional Boxes}
\author
{
	Md.~Jawaherul Alam\inst{1} \and
	Michael Kaufmann\inst{2} \and
	Stephen G.~Kobourov\inst{1}
}
\institute{
	Department of Computer Science, University of Arizona, USA
	\and
	Wilhelm-Schickhard-Institut f\"ur Informatik, Universit\"at T\"ubingen, Germany
}
\begin{document}
\date{}
\maketitle

\begin{abstract}
We study two variants of the problem of contact representation of planar graphs with
 axis-aligned boxes. In a \textit{cube-contact representation} we realize each
 vertex with a cube, while in a \textit{proportional box-contact representation} each
 vertex is an axis-aligned box with a prespecified volume. We present
 algorithms for constructing cube-contact representation and proportional box-contact
 representation for several classes of planar graphs.

\end{abstract}

\section{Introduction}

We study \textit{contact representations} of planar graphs in 3D, where vertices are
 represented by interior-disjoint axis-aligned boxes and edges are represented by shared
 boundaries between the corresponding boxes. A contact representation of a planar graph
 $G$ is \textit{proper} if for each edge $(u,v)$ of $G$, the boxes for $u$ and $v$ have a
 shared boundary with non-zero area. Such a contact between two boxes is also called a
 \textit{proper contact}. \textit{Cubes} are axis-aligned boxes where all sides have the same
 length. A contact representation of a planar graph with boxes is called a \textit{cube-contact}
 representation when all the boxes are cubes. In a weighted variant of
 the problem a \textit{proportional
 box-contact} representation is one where each vertex $v$ is represented with
 a box of volume $w(v)$, for any function
 $w:V\rightarrow\mathbb{R}^+$, assigning weights to the vertices $V$.
 Note that this ``value-by-volume'' representation is a natural
 generalization of the ``value-by-area'' cartograms in 2D.


\smallskip\noindent{\bf Related Work:} The history of representing planar graphs as contact graphs dates back at least to
 Koebe's 1930 theorem~\cite{Koebe36} for representing planar graphs by touching disks
 in 2D. Proper contact representation with rectangles in 2D is the well-known {\em rectangular
 dual} problem, for which several characterizations exist~\cite{KK85,Ungar53}.
 Representations with other axis-aligned and non-axis-aligned
 polygons~\cite{FM94,GHKK10,YS93} have been studied. Related graph-theoretic, combinatorial
 and geometric problems continue to be of
 interest~\cite{BGPV08,Fusy09,EMSV12}. The weighted variant
 of the problem has been considered in the context of 
 rectangular, rectilinear, and unrestricted cartograms~\cite{BR11,EFK+13,KS07}.

Contact representations have been also considered in 3D. Thomassen~\cite{Thom88} shows that any planar graph has a proper
 contact representation with touching boxes, while Felsner and Francis~\cite{FF11} find
 a (not necessarily proper) contact representation of any planar graph with touching cubes.
 Recently, Bremner \textit{et al.}~\cite{BEF+12} asked whether any planar graph
 can be represented by proper contacts of cubes. They answered the question positively
 for the case of partial planar 3-trees and some planar grids, but the problem remains
 open for general planar graphs. The weighted variant of the problem
 in 3D is much less studied, although recently Alam \textit{et al.}~\cite{AKLPV14} have
 presented algorithms for proportional representation of several
 classes of graphs (e.g., outerplanar, planar bipartite, planar,
 complete), using 3D L-shapes.


 \smallskip\noindent{\bf Our Contribution:} Here we expand the class of planar graph representable by proper contact of cubes. We also
 show that several classes of planar graphs admit proportional box-contact representations.
Specifically, we show how to compute a proportional box-contact representation for plane 3-trees, while a cube-contact
 representation for the same graph class follows
 from~\cite{BEF+12}. We also show how to compute a proportional
 box-contact representation and a cube-contact representation for {\em nested maximal outerplanar
   graphs}, which are defined as follows.
A \textit{nested outerplanar graph} is either an outerplanar graph or a planar graph $G$
 where each component induced by the internal vertices is another nested outerplanar
 graph with exactly three neighbors in the outerface of $G$. A \textit{nested maximal
 outerplanar graph} is a subclass of nested outerplanar graphs
that is either a maximal outerplanar graph or a
 maximal planar graph in which the vertices on the outerface induce a maximal outerplanar
 graph and each component induced by internal vertices is another nested maximal outerplanar
 graph. 


\section{Preliminaries}


A \textit{3-tree} is either a 3-cycle or a graph $G$ with a vertex $v$ of degree three in $G$
 such that $G-v$ is a 3-tree and the neighbors of $v$ form a triangle. If $G$ is planar, then
 it is called a \textit{planar 3-tree}. A \textit{plane 3-tree} is a planar 3-tree along with a fixed
 planar embedding. 
Starting with a 3-cycle, any planar 3-tree can be
 formed by recursively inserting a vertex inside a face and adding an edge between the newly
 added vertex and each of the three vertices on the face~\cite{BE09,MNRA10}.
 Using this simple construction, we can create in linear time a {\em representative tree} for
 $G$~\cite{MNRA10}, which is an ordered rooted ternary tree $T_G$ spanning all the internal
 vertices of $G$. The root of $T_G$ is the first vertex we have to insert into the face of the three
 outer vertices. Adding a new vertex $v$ in $G$ will introduce three new faces belonging to
 $v$. The first vertex $w$ we add in each of these faces will be a child of $v$ in $T_G$. The
 correct order of $T_G$ can be obtained by adding new vertices according to the
 counterclockwise order of the introduced faces. 

An \textit{outerplanar graph} is one that has a planar embedding with all vertices
 on the same face (outerface). An outerplanar graph is \textit{maximal} if no edge can be added without violating its outerplanarity. Thus in a maximal outerplanar graph
 all the faces except for the outerface are triangles.
For $k>1$, a $k$-outerplanar
 graph $G$ is an embedded graph such that deleting the outer-vertices from $G$ yields a graph
 where each component is at most a $(k-1)$-outerplanar graph; a $1$-outerplanar graph is just an outerplanar graph.  Note that any planar graph is a
 $k$-outerplanar graph for some integer $k>0$. 

 Let $G$ be a planar graph. We define the \textit{pieces} of $G$ as follows. If $G$ is
 outerplanar, it has only one piece, the graph itself. Otherwise, let $G_1$, $G_2$, $\ldots$,
 $G_l$ be the components of the graph obtained by deleting the outer vertices (and their
 incident edges) from $G$. Then the pieces of $G$ are all the pieces of $G_i$ for each
 $i\in\{1, 2, \ldots, l\}$, as well as the subgraph of $G$ induced by the outer-vertices
 of $G$. Note that each piece of $G$ is an outerplanar graph. Since $G$ is an
 embedded graph, for each piece $P$ of $G$, we can define the \textit{interior} of $P$
 as the region bounded by the outer cycle of $P$. Then we can define a rooted tree
 $\mathcal{T}$ where the pieces of $G$ are the vertices of $\mathcal{T}$ and the
 parent-child relationship in $\mathcal{T}$ is determined as follows: for each piece
 $P$ of $G$, its children are all the pieces of $G$ that are in the interior of $P$ but
 not in the interior of any other pieces of $G$. A piece of $G$ has \textit{level} $l$
 if it is on the $l$-th level of $\mathcal{T}$. All the vertices of a piece
 at level $l$ are also \textit{$l$-level} vertices. A planar graph is a \textit{nested outerplanar
 graph} if each of its pieces at level $l>0$ has exactly three vertices of level $(l-1)$ as
 a neighbor of some of its vertices. On the other hand a \textit{nested maximal outerplanar
 graph} is a maximal planar graph where all the pieces are maximal outerplanar graphs.

\section{Representations for Planar 3-trees}

Here we prove that planar 3-trees have proportional box-representations in two different ways.
The first one is a more intuitive proof; the second one includes a direct computation of the coordinates
for the representation.

\begin{theorem}
\label{th:p3t-box} Let $G=(V, E)$ be a plane 3-tree with a weight function $w$. Then a
 proportional box-contact representation of $G$ can be computed in linear time.
\end{theorem}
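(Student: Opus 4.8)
\begin{sketch}
The plan is to build the representation recursively, following the representative tree $T_G$ of the plane $3$-tree $G$ recalled in the Preliminaries (so we first rescale all weights by a small constant $t$, build everything with the scaled weights $t\cdot w(\cdot)$, and at the very end apply an axiswise stretch by $t^{-1/3}$ per axis to restore the volumes; this last map preserves interior-disjointness and all contacts). The core is an inductive invariant. While processing $T_G$ we maintain, for every triangular face $f=\{p,q,r\}$ not yet processed: the boxes placed so far are pairwise interior-disjoint and realize all required contacts and (scaled) volumes; and there is an empty axis-aligned box $C_f$, the \emph{cavity} of $f$, together with a corner $\mathbf{c}_f$ of $C_f$ such that the three faces of $C_f$ incident to $\mathbf{c}_f$ contain, in a relative neighbourhood of $\mathbf{c}_f$, positive-area portions of the shared boundary with $B_p$, $B_q$, and $B_r$, respectively; moreover $\mathrm{vol}(C_f)$ exceeds the total weight of the vertices of $G$ still to be inserted inside $f$ (with a constant factor of slack, and with a matching lower bound on the two side-lengths of each of the three contact regions). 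Intuitively $C_f$ is the room reserved for the part of $G$ drawn inside $f$, and $\mathbf{c}_f$ is the corner at which the next box will be attached so that it immediately touches all three of $B_p,B_q,B_r$.

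For the base case we realize the outer triangle $\{a,b,c\}$ explicitly: for suitable positive $x_1<x_2$, $y_1<y_2$, $z_1<z_2$ solving the three volume equations $x_2 y_1 z_1 = t\,w(a)$, $x_1(y_2-y_1)z_1 = t\,w(b)$, $x_2 y_2(z_2-z_1)=t\,w(c)$ (which is possible since, after rescaling, all weights are small), take $B_a=[0,x_2]\times[0,y_1]\times[0,z_1]$, $B_b=[0,x_1]\times[y_1,y_2]\times[0,z_1]$, $B_c=[0,x_2]\times[0,y_2]\times[z_1,z_2]$ and cavity $C_{\{a,b,c\}}=[x_1,x_2]\times[y_1,y_2]\times[0,z_1]$; one checks these boxes are interior-disjoint, pairwise share positive-area faces, and the three faces of $C_{\{a,b,c\}}$ at the corner $(x_1,y_1,z_1)$ lie on $B_a,B_b,B_c$. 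For the inductive step, let $v$ be the next vertex inserted into the active face $f=\{p,q,r\}$ according to $T_G$. Place $B_v$ as a small box in the corner $\mathbf{c}_f$ of $C_f$, choosing its three side-lengths so that their product equals (the scaled) $w(v)$ and each of its three faces at $\mathbf{c}_f$ lies inside the corresponding contact region of $C_f$; this is feasible precisely because $w(v)$ is small compared with the cavity and the contact regions. Then $B_v$ has positive-area contact with each of $B_p,B_q,B_r$, and $C_f\setminus B_v$ decomposes into three axis-aligned boxes which, after suitable cropping, serve as the cavities $C_{\{v,p,q\}}$, $C_{\{v,q,r\}}$, $C_{\{v,r,p\}}$: each inherits two of its three corner contact faces as ``wall'' faces of $C_f$ and gets the third one on the freshly placed $B_v$, so the required three-neighbour corner structure holds. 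If a child face receives no vertex (it is a face of $G$ itself) its cavity is simply left unused. Since the decomposition of $C_f\setminus B_v$ is volume-preserving, the freed volume inside $f$ equals $\mathrm{vol}(C_f)-w(v)$, and by the remaining degrees of freedom in the cropping we distribute it among the three child cavities so as to re-establish the volume part of the invariant.

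Each step creates one box and performs $O(1)$ arithmetic, so the representation is produced in $O(n)$ time; interior-disjointness and correctness of all contacts and volumes follow from the invariant together with the final rescaling. The main obstacle is making the invariant genuinely self-sustaining: one must verify that the side-lengths of $B_v$ can always be picked to respect the product constraint \emph{and} fit inside the three contact regions, and that the three child cavities can simultaneously be given enough volume and large enough corner neighbourhoods for their own recursions --- i.e.\ that neither the cavities nor their corner neighbourhoods shrink too fast as one descends $T_G$. I expect to handle this by strengthening the invariant with explicit lower bounds on the cavity volume and on the two side-lengths of each contact region (allowed to degrade by a controlled factor per level) and checking that the lossless corner-split above respects these bounds --- which is exactly the bookkeeping made explicit in the second, coordinate-computing proof.
\end{sketch}
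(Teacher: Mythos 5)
Your proposal is correct and follows essentially the same route as the paper: a recursive traversal of the representative tree $T_G$, placing each new vertex's box at the corner formed by its three predecessors and splitting the leftover cavity losslessly into three sub-cavities, exploiting the freedom to make a box of fixed volume arbitrarily thin in one direction so that its contact faces and the child cavities can be made as large as needed. The quantitative bookkeeping you flag as the main obstacle is exactly what the paper's second proof supplies via the bottom-up potential $W(v)=\prod_{i=1}^{3}\bigl[W(v_i)+\sqrt[3]{w(v)}\bigr]$, which prescribes the side-lengths of each corner box and guarantees that each child cavity retains volume at least $W(v_i)$.
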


\begin{figure}[t]
\centering
\includegraphics[width=0.8\textwidth]{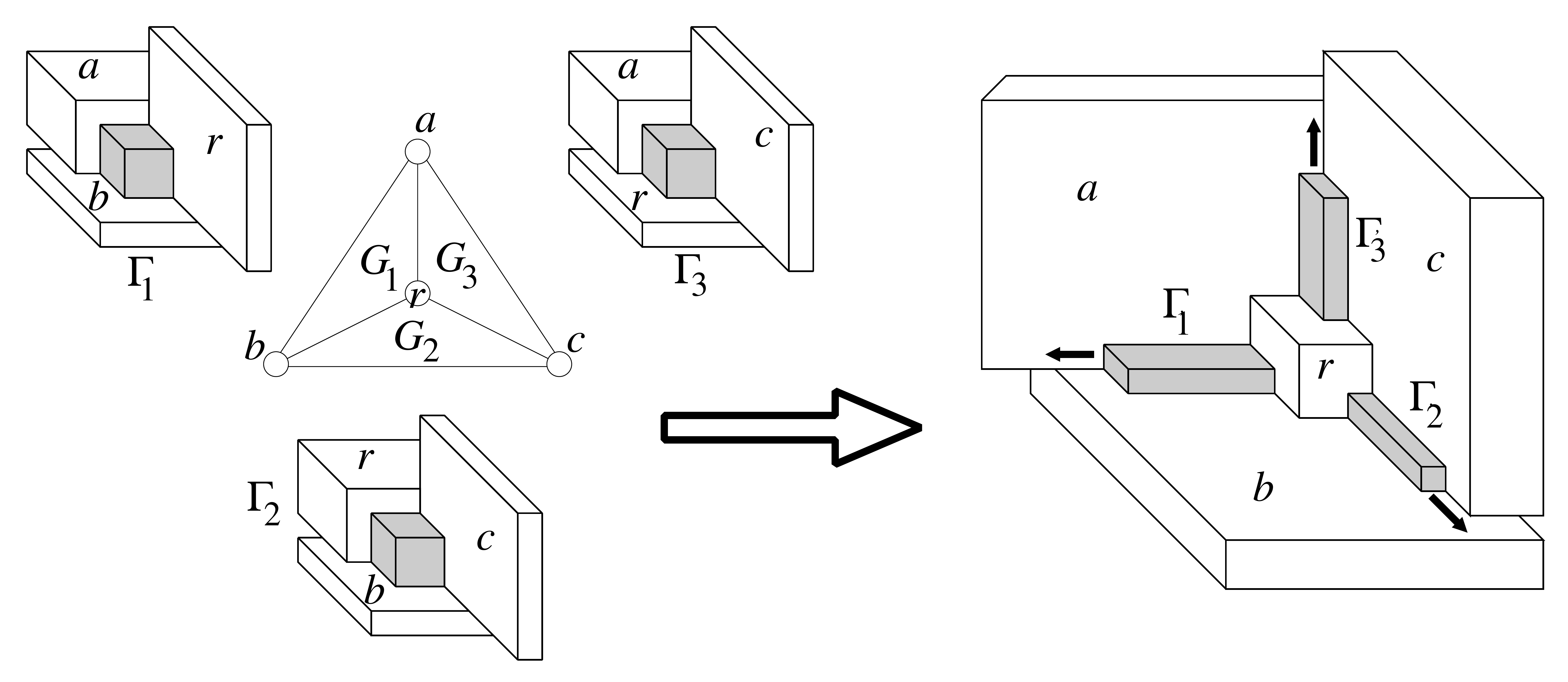}
\caption{Illustration for the proof of Theorem~\ref{th:p3t-box}.}
\label{fig:prop-p3t-2}
\end{figure}

\noindent
\textbf{First Proof:}
Let $a$, $b$, $c$ be the outer vertices of $G$. We construct a representation $\Gamma$ for
 $G$ where $b$ occupies the bottom side of $\Gamma$, $a$ occupies the back of
 $\Gamma-\{b\}$ and $c$ occupies the right side of $\Gamma-\{a,b\}$; see
 Fig.~\ref{fig:prop-p3t-2}. 
 Here for a set of vertices $S$, $\Gamma-S$ denotes the representation obtained from $\Gamma$ by deleting
 the boxes representing the vertices in $S$. The claim is trivial when
 $G$ is a triangle, so assume that $G$ has at least one internal vertex. Let $r$
 be the root of the representation tree $T_G$ of $G$. Then $r$ is adjacent to $a$, $b$ and $c$
 and thus defines three regions $G_1$, $G_2$ and $G_3$ inside the triangles $\Delta_1=abr$,
 $\Delta_2=bcr$ and $\Delta_3=car$, respectively (including the vertices of these triangles).
 By induction hypothesis $G_i$, $i=1,2,3$ has a proportional box-contact representation
 $\Gamma_i$ where the boxes for the three vertices in $\Delta_i$ occupy the bottom, back
 and right sides of $\Gamma_i$. Define $\Gamma'_i=\Gamma_i-\Delta_i$. We now construct
 the desired representation for $G$. First take a box for $r$ with volume $w(r)$ and place it in
 a corner created by the intersection of three pairwise-touching boxes; see
 Fig.~\ref{fig:prop-p3t-2}.  For each $\Delta_i$, $i=1,2,3$, there is a corner $p_i$ formed by the
 intersection of the three boxes for $\Delta_i$. We now place $\Gamma'_i$ (after possible
 scaling) in the corner $p_i$ so that it touches the boxes for the vertices in $\Delta_i$ by
 three planes. Note that this is always possible since we can choose the surface areas for $a$, $b$ and $c$ to
 be arbitrarily large and still realize their corresponding weights
 by appropriately changing the thickness in the third dimension.
This construction requires only linear time, by keeping the scaling factor for each
 region in the representative tree $T_G$ at the vertex representing
 that region. 
Then the exact coordinates can be computed with a top-down traversal of $T_G$. \qed

\noindent
\textbf{Second Proof:}
 Assume (after possible factoring) that for each vertex $v$ of $G$, the weight $w(v)$ is at least
 1. 
Let $T_G$ be the representative tree of $G$. For
 any vertex $v$ of $T_G$, we denote by $U_v$, the set of the descendants of $v$ in $T_G$
 including $v$. The \textit{predecessors} of $v$ are the neighbors of $v$ in $G$ that are not
 in $U_v$. Clearly each vertex of $T_G$ has exactly three predecessors.
%
%
We now define a parameter $W(v)$
 for each vertex $v$ of $T_G$. Let $v_1$, $v_2$ and $v_3$ be the three children of $v$ in
 $T_G$ (where zero or more of these three children may be empty). Then $W(v)$ is defined as
 $\Pi_{i=1}^{3}[W(v_i)+\sqrt[3]{w(v)}]$, where $U(v_i)$ is taken as zero when $v_i$ is empty. We
 can compute the value of $W(v)$ for each vertex $v$ of $T_G$ by a linear-time bottom-up
 traversal of $T_G$. Once we have computed these values, we proceed on constructing the
 box-contact representation as follows.

\begin{figure}[htbp]
\centering
\includegraphics[width=.9\textwidth]{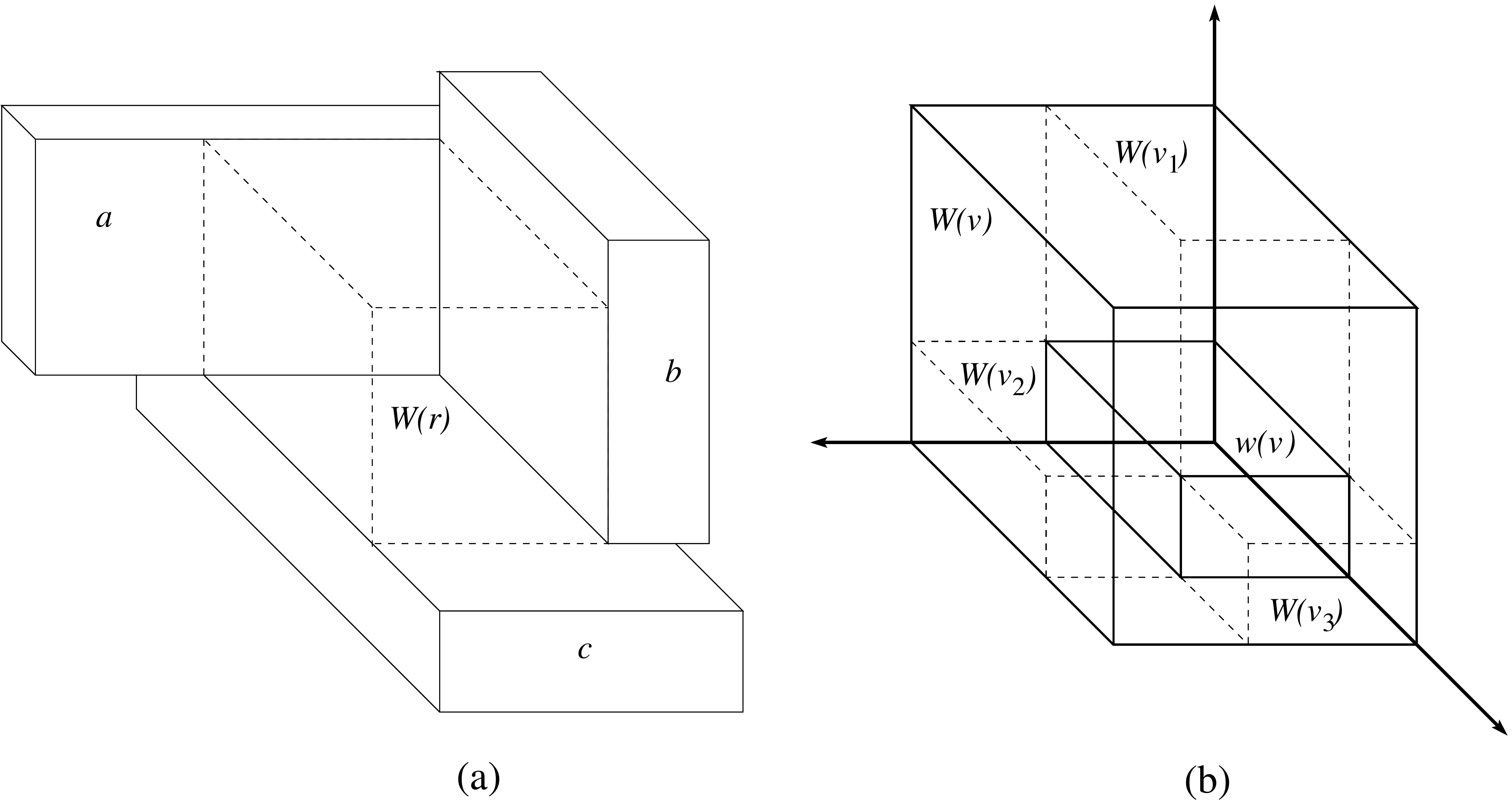}
\caption{Illustration for the second proof of Theorem~\ref{th:p3t-box}.}
\label{fig:prop-p3t}
\end{figure}

Let $a$, $b$, $c$ be the three outer vertices of $G$ in the clockwise order and let $r$ be the
 root of $T_G$. We start by computing three boxes for $a$, $b$ and $c$ with the correct volume
 as illustrated in Fig.~\ref{fig:prop-p3t}(a), so that the volume of the dotted box $R$ is $W(r)$.
 We will now construct a box representation of $U_r$ inside $R$ so that all the vertices in $U_r$
 adjacent to an outer vertex is represented by a box with a face co-planar on the face of $R$
 adjacent to box representing that outer vertex. We do this recursively by a top-down
 computation on $T_G$. Let $v$ be a vertex of $T_G$ with the three predecessors $u_1$, $u_2$
 and $u_3$. Let $D(v)$ be a box with volume $W(v)$ and let $t_1$, $t_2$, $t_3$ be three faces
 of it with a common point. While traversing $v$, we compute a proportional box-contact
 representation of $U_v$ inside $D(v)$ where the vertices in $U_v$ adjacent to $u_i$ for some
 $i\in\{1, 2, 3\}$ is represented by a box with a face co-planar with $t_i$. Let $v_1$, $v_2$ and
 $v_3$ are the three children of $v$ in $T_G$ (where zero or more of these children may be
 empty). Also assume that $x_1$, $x_2$, $x_3$ are the length, width and height of $D(v)$,
 respectively and $p$ is the common point of $t_1$, $t_2$ and $t_3$. Then first compute a
 box $R(v)$ of volume $w(v)$ for $v$ with a corner at $p$ where $x'_1$, $x'_2$ and $x'_3$
 are the length, width and height of $R(v)$, such that
 $x'_i= \frac{\sqrt[3]{w(v)}}{W(v_1)+\sqrt[3]{w(v)}}$.
 These choices of $x'_i$'s also creates three boxes $D(v_i)$ with volume at least $W(v_i)$,
 $i\in\{1, 2, 3\}$, as illustrated in Fig.~\ref{fig:prop-p3t}(b). Finally we recursively compute
 the box representations for $U_{v_i}$ inside $D(v_i)$ for $i\in\{1, 2, 3\}$ to complete
 the construction. \qed

\begin{theorem}~\cite{BEF+12}
Let $G$ be a plane 3-tree. Then a cube-contact representation of $G$ can be computed in linear
 time.
\end{theorem}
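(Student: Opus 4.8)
The plan is to recycle the recursive scheme used for Theorem~\ref{th:p3t-box}, observing that dropping the prescribed volumes is precisely what makes the all-cubes version work: in the weighted construction we needed one free dimension in which to trade surface area against thickness, whereas now every box may be chosen as small as convenient, which is all a cube requires. So I would again let the representative tree $T_G$ drive the construction and place one cube per internal vertex into a corner formed by three pairwise-touching cubes.

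\textbf{Base case.} Represent the outer triangle $abc$ by three congruent axis-aligned cubes that are pairwise in proper contact and whose union leaves, at a common corner point $p$, a box-shaped empty region $D(r)$ bounded by one face of each of the three cubes (for instance, a large ``floor'' cube $b$ with two further large cubes $a$ and $c$ resting on it, one at the back and one at the right, enlarged until they also share a face). If $G$ is a triangle we are done; otherwise $r=\mathrm{root}(T_G)$ is adjacent to $a,b,c$, and we run the inductive step with $v=r$ and predecessors $a,b,c$.

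\textbf{Inductive step and invariant.} For a node $v$ of $T_G$ with predecessors $u_1,u_2,u_3$ I maintain the invariant that $U_v$ is drawn with cubes inside a box-shaped region $D(v)$ whose three mutually orthogonal bounding faces $t_1,t_2,t_3$ are supplied by the cubes of $u_1,u_2,u_3$, and that the set of cubes touching $t_i$ with positive area is \emph{exactly} the set of cubes of vertices of $U_v$ adjacent to $u_i$. Placing $p$ at the origin with $D(v)=[0,x_1]\times[0,x_2]\times[0,x_3]$ and $t_1=\{x=0\}$, $t_2=\{y=0\}$, $t_3=\{z=0\}$, I put a cube $R(v)=[0,s]^3$ for $v$ with one corner at $p$; this realizes the three edges $vu_1,vu_2,vu_3$ properly and produces nothing else. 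The three faces of $R(v)$ that are interior to $D(v)$, each paired with two of the $t_i$, carve out three new box-shaped corner regions, one per child $v_1,v_2,v_3$ of $v$: the child inserted into face $vu_1u_2$ gets $D(v_1)=[0,s]\times[0,s]\times[s,x_3]$, backed by $u_1$, $u_2$ and the top face of $R(v)$, and the other two children get the symmetric regions against the $x$- and $y$-faces of $R(v)$. One checks that $R(v),D(v_1),D(v_2),D(v_3)$ are pairwise interior-disjoint and that any two of them meet in at most a line, so no spurious positive-area contact is created; recursing into the $D(v_i)$ then completes the drawing.

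\textbf{Main obstacle and linear time.} The one genuine difficulty, and the point where the rigidity of cubes bites, is that the three child regions above have shapes constrained by the single parameter $s$, so $s$ must be chosen to leave \emph{every} $D(v_i)$ simultaneously large enough to hold all of $U_{v_i}$ drawn with cubes. I would handle this exactly as in the second proof of Theorem~\ref{th:p3t-box}: a single bottom-up pass over $T_G$ computes for each $v$ a scale $\sigma(v)$ --- defined recursively from the $\sigma(v_i)$ together with a unit term --- that upper-bounds the side length a corner region must have to accommodate $U_v$; then at $v$ one picks $s$ as a fixed fraction of the side of $D(v)$ (determined by the $\sigma(v_i)$) so that each $D(v_i)$ has side at least $\sigma(v_i)$. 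Since the $\sigma$-values are computed once and stored at the nodes and a single top-down traversal reads off all coordinates by composing the stored scale factors, the whole algorithm runs in linear time, as claimed. Finally, every contact produced is a full face of one cube laid against part of a face of another, hence automatically proper, and every edge of $G$ is created by the step that inserts its later endpoint, so the resulting representation is exact.
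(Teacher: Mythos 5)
Your construction is correct and is essentially the approach the paper points to: the paper cites this result from Bremner et al.~\cite{BEF+12} without reproving it, remarking only that the proof relies on the recursive decomposition of planar 3-trees, and your corner-placement recursion over the representative tree $T_G$ is exactly that scheme (it also mirrors the paper's first proof of Theorem~\ref{th:p3t-box} with the volume constraint dropped). One small remark: your ``main obstacle'' is not really an obstacle---since cube sizes are unconstrained, any corner box with positive side lengths accommodates the recursion after scaling, so the $\sigma$-bookkeeping is needed only for the linear-time coordinate computation, not for existence (and note that $R(v)$ meets each $D(v_i)$ in a square, not a line, which is precisely the intended contact face for the child's cubes adjacent to $v$).
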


The proof of this claim also relies on the recursive decomposition of
planar 3-trees. 


\section{Cube-Contacts for Nested Maximal Outerplanar Graphs}

We prove the following main theorem in this section:

\begin{theorem}
Any nested maximal outerplanar graph has a proper contact representation with cubes.
\label{th:nested}
\end{theorem}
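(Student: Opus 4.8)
The plan is to build the representation by following the rooted tree $\mathcal{T}$ of pieces of $G$ from the root downward, so that the proof has two ingredients: a ``base engine'' that produces a proper cube-contact representation of a single maximal outerplanar graph with some extra geometric slack, and a ``gluing step'' that inserts the representation of a child piece into that slack. First I would record the structural fact that makes the gluing possible: if $P$ is a piece at level $l\ge 1$ with parent piece $Q$, then the three level-$(l-1)$ vertices $u_1,u_2,u_3$ that $P$ attaches to form a separating triangle of the triangulation $G$ whose interior contains no vertex of $Q$; hence $u_1u_2u_3$ bounds an (internal) face of the maximal outerplanar graph $Q$. So attaching pieces amounts to repeatedly ``filling a triangular face'' of an already-built representation.

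For the base engine I would prove, by induction on $|V(H)|$ using ear removal, that every maximal outerplanar graph $H$ has a proper cube-contact representation satisfying an invariant of the following shape. Call three pairwise-properly-touching cubes a \emph{generic corner} if their three contact rectangles lie in three mutually orthogonal planes whose common point $p$ lies in the relative interior of all three rectangles; then one of the eight octants at $p$ is disjoint from all three cubes, and from the interior of that octant each of the three cubes can be reached by a small axis-aligned cube making a proper contact --- I will call such a free octant a \emph{port}. The invariant is: (a) the three cubes of every internal face of $H$ form a generic corner, giving that face a port; (b) each boundary edge of $H$ likewise has a free port reachable by a proper contact from its two cubes; and (c) all these ports are pairwise disjoint and disjoint from every cube. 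The base $H=K_3$ is three cubes placed in a generic corner configuration, which one checks has the required face-port and three edge-ports. For the inductive step, delete an ear $v$ with neighbours $u,w$ (so $uw$ is a boundary edge of $H-v$), obtain $\Gamma(H-v)$ by the hypothesis, and insert a \emph{sufficiently small} cube $C_v$ inside the port of the edge $uw$, positioned so that $C_v$ has proper contacts with $C_u$ and $C_w$ and so that the new internal face $uvw$ and the two new boundary edges $uv,vw$ each acquire their own port; making $C_v$ tiny prevents it from meeting any other cube or intruding on any other port, so (a)--(c) are restored.

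With this in hand, I would do the recursion on $\mathcal{T}$. Realize the root piece (the outer maximal outerplanar graph) by the base engine. Inductively, suppose all pieces of level $\le l$ have been placed with all ports intact, and let $P$ be a level-$(l+1)$ piece attaching to the face $f=u_1u_2u_3$ of its parent $Q$. The cubes $C_{u_1},C_{u_2},C_{u_3}$ form a generic corner with a free port $O$; inside $O$, arbitrarily close to its corner point (scaled down as much as necessary), I place a representation of $P$ together with everything nested inside $P$ --- this is itself an instance of the same construction applied to the smaller nested maximal outerplanar graph bounded by the triangle $f$, with the three cubes playing the roles of $u_1,u_2,u_3$ identified with $C_{u_1},C_{u_2},C_{u_3}$; the generic-corner hypothesis and the freedom to rescale make this identification realizable while keeping everything else strictly inside $O$. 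Iterating over all pieces yields $\Gamma(G)$. Every edge of $G$ is then realized by a proper contact --- edges inside a piece by the base engine, edges between a piece and its parent face by the gluing step --- and, because each inserted sub-representation lives in a previously free port disjoint from all cubes, no non-adjacent pair of cubes shares positive area; hence $\Gamma(G)$ is a proper cube-contact representation. (Multiple children attaching to the same face of $Q$ are handled by subdividing that face's port into disjoint sub-ports, one per child.)

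The hard part is the geometric bookkeeping in the base engine: choosing the ``generic corner / port'' invariant so that it is at once strong enough to drive both the ear-insertion step and the piece-attachment step, and stable under both --- in particular guaranteeing that when a small cube is dropped into a port, each newly created face again comes with a genuine generic corner and its own free port, and that the finitely many ports can be kept pairwise disjoint throughout (so that no later insertion collides with an earlier one). Once the invariant is fixed, the individual moves --- placing one small cube against two or three prescribed cubes with proper contacts, and verifying that a free octant remains --- are elementary. (As an aside, one could also observe that plane $3$-trees are a special case handled already by the construction of Bremner \textit{et al.}, and the present argument reuses the same ``fill the corner of three touching cubes'' idea in the style of the first proof of Theorem~\ref{th:p3t-box}.)
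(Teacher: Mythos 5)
Your top-level recursion (realize each piece, then nest children into corners of three already-placed cubes) is the same as the paper's, but the proposal omits the step where all the real work lies. The ``base engine'' you describe only produces a \emph{free-standing} cube representation of a single maximal outerplanar graph. What the gluing step actually requires is to realize a piece $P$ inside the free octant at the corner of three \emph{prescribed} cubes $C_{u_1},C_{u_2},C_{u_3}$ so that every vertex on the sub-path of $P$'s outer cycle assigned to $u_i$ makes a positive-area contact with the wall $C_{u_i}$, while all chords of $P$ are simultaneously realized and $P$'s own internal faces retain corners for $P$'s children. You reduce this to ``the same construction applied to the graph bounded by the triangle $f$, with the three cubes identified'' --- but that graph is a triangulation, not outerplanar, so the ear-removal engine does not apply to it; its root piece is just the triangle $u_1u_2u_3$ and its unique child is $P$, so unwinding your recursion returns you to exactly the step you are trying to define; and ``identifying'' three cubes of one independently built representation with three differently sized and positioned cubes of another is not an operation you have justified. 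This attachment step is precisely what the paper's Lemma~\ref{lem:piece} does, via a case analysis on the chord structure of $P$ (no short chords, chords incident to $r,s,t$, short chords) together with Lemma~\ref{lem:square}, which produces square representations whose union is a rectangle so that an entire path of cubes can be pressed flat against one wall as a ``rectangular pipe.'' A single free-octant port at the corner does not obviously accommodate three long paths of cubes each glued to a different wall plus the chords between them, so this cannot be absorbed into ``geometric bookkeeping.''

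Two secondary problems. First, your \emph{generic corner} is vacuous as defined: if the three contact rectangles of three pairwise touching axis-aligned boxes lie in three mutually orthogonal planes $x=a$, $y=b$, $z=c$, then chasing which side of each plane each box lies on shows the point $(a,b,c)$ can be at most a corner of each contact rectangle, never in its relative interior. The realizable configuration is the ``three walls'' one, in which two of the three contacts lie in the \emph{same} plane; this is repairable but indicates the port invariant has not been checked, and verifying that an ear insertion leaves the new face $uvw$ with a genuine free corner (when two of its three contacts are coplanar) is a real obligation, not an elementary one. Second, your structural claim that the three attachment vertices bound an internal face of the parent piece is fine, but the sub-port fix for several children on one face needs the same unproven attachment machinery once per child.
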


We prove Theorem~\ref{th:nested} by construction, starting with a representation
 for each piece of $G$, and combining the pieces to complete the
 representation for $G$.

Let $G$ be a nested maximal outerplanar graph. We first augment the graph $G$ by adding
 three mutually adjacent dummy vertices $\{A, B, C\}$ on the outerface
 and then triangulating the graph by adding dummy edges from these three
 vertices to the outer vertices of $G$ such that the graph remains planar; see
 Fig.~\ref{fig:cube-merge1}(a). Call this graph the \textit{extended graph} of $G$.
 For consistency, let the three dummy vertices have level $0$.
 The observation below follows from the definition of nested maximal outerplanar graphs.

\begin{obs}
	\label{obs:extended} Let $G$ be a nested-maximal planar graph and let $G'$ be the
	extended graph of $G$. Then for each piece $P$ of $G$ at level $l$, there is a triangle
	of $(l-1)$-level vertices adjacent to the vertices of $P$ and no other $k$-level
	vertices with $k<l$ are adjacent to any vertex of $P$.
\end{obs}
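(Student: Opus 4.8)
The plan is to reduce the claim to the recursive definition of nested maximal outerplanar graphs, after first relating the piece--tree of the extended graph $G'$ to that of $G$. First I would note that deleting the outer vertices $A,B,C$ from $G'$ (together with the dummy edges, which are exactly the edges incident to $A$, $B$ or $C$) returns precisely $G$, which is connected. Hence the pieces of $G'$ are exactly the single triangle $\{A,B,C\}$ --- the root of $\mathcal{T}(G')$, at level $0$ --- together with all the pieces of $G$, and $\mathcal{T}(G')$ is obtained from $\mathcal{T}(G)$ by attaching $\{A,B,C\}$ above its old root. Consequently a piece of $G$ sits at level $l$ in $\mathcal{T}(G')$ if and only if it sits at level $l-1$ in $\mathcal{T}(G)$.

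For the base case $l=1$, the piece $P$ is the subgraph of $G$ induced by its outer vertices; the only level-$0$ vertices are $A,B,C$, which form a triangle, and by the triangulation step in the construction of $G'$ each of $A,B,C$ is adjacent to some outer vertex of $G$ (otherwise a face incident to that dummy vertex would not be a triangle). Since there are no level-$k$ vertices with $k<1$ besides $A,B,C$, the second half of the statement is vacuous in this case.

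For $l\ge 2$, let $P'$ be the parent of $P$ in $\mathcal{T}(G')$; then $P'$ is a piece at level $l-1$, and by the definition of the piece--tree $P$ lies in the interior of $P'$ and in the interior of no other piece. Because $P'$ is a maximal outerplanar graph, its edges partition its interior into triangular faces; since $P$ is connected and its drawing crosses no edge of $G$, it lies inside a single such face $f=xyz$, whose three vertices are level-$(l-1)$ vertices and are pairwise adjacent. The cycle $xyz$ separates everything drawn inside $f$ --- namely $P$ together with all pieces nested deeper than $P$ --- from the rest of $G$, so every neighbor of a vertex of $P$ is either one of $x,y,z$ or a vertex of level at least $l$. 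In particular no vertex of level $k<l$ other than $x,y,z$ is adjacent to $P$; and since $G$ is, in particular, a nested outerplanar graph, the definition guarantees that $P$ has exactly three neighbors of level $l-1$, which must therefore be exactly $x,y,z$. This establishes both parts.

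The only content beyond unwinding definitions is that the three lower-level neighbors of $P$ are mutually adjacent and that $P$ has no neighbor of level strictly below $l-1$; both follow once one argues that $P$ occupies a single triangular face of its parent piece, and this is precisely where the maximality of $G$ (its being a triangulation) enters. I expect the main nuisance to be purely organizational: keeping the off-by-one between $\mathcal{T}(G)$ and $\mathcal{T}(G')$ straight and making the ``single triangular face'' step precise from the recursive definition of pieces. There is no genuine obstacle.
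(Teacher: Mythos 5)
Your argument is correct, and it fills in exactly what the paper leaves implicit: the paper states this as an observation that ``follows from the definition'' and gives no proof, and your write-up is a careful unwinding of that definition together with the planarity/separation argument. The one step with real content --- that a connected piece must sit inside a single triangular inner face of its (maximal outerplanar) parent piece, so that the bounding triangle accounts for all lower-level neighbors --- is identified and handled correctly.
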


Given this observation, we use the following strategy to obtain a contact representation of $G$
 with cubes. For each piece $P$ of $G$ at level $l$, let $A$, $B$ and $C$ be the three
 $(l-1)$-level vertices adjacent to $P$'s vertices. Let $P'$ be
 the subgraph of $G$
 induced vertices of $P$ as well as $A$, $B$ and
 $C$; call $P'$ the {\em extended
 piece} of $G$ for $P$. We obtain a contact representation of $P'$ with cubes and delete
 the three cubes for $A$, $B$ and $C$ to obtain the contact representation of $P$ with cubes.
 Finally, we combine the representations for the pieces to complete the desired
 representation of $G$.

Before we give more details on this algorithm, we have the following lemma, that we use in this section.
 Furthermore this result is also interesting by itself, since for any outerplanar graph $O$, where each
 face has at least one outer edge, Lemma~\ref{lem:square} gives a contact representation of $O$ on
 the plane with squares such that the outer boundary of the representation is a rectangle.

\begin{lemma}
	\label{lem:square} Let $G$ be planar graph with outerface $ABba$ and at least one
	internal vertex, such that $G-\{A,B\}$ is a maximal outerplanar graph. If there is no
	chord between any two neighbors of $A$ and no chord between any two neighbors of
	$B$, then $G$ has a contact representation $\Gamma$ in 2D where each inner vertex
	is represented by a square, the union of these squares forms a rectangle,
	and the four sides of these rectangles represent $A$, $B$, $b$ and $a$, respectively.
\end{lemma}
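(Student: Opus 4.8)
My plan is to induct on the number of internal vertices, peeling off vertices using the structure of maximal outerplanar graphs. Since $G - \{A, B\}$ is maximal outerplanar, it has an ear, i.e. an internal vertex $v$ of degree two in $G - \{A, B\}$ whose two neighbors $x, y$ are adjacent. In $G$ itself, $v$ may additionally be adjacent to $A$ and/or $B$. I will consider $v$ chosen so that it lies "at one end" of the outerpath of $G - \{A, B\}$ — more precisely, among the ears of the maximal outerplanar graph $G - \{A, B\}$, pick one, say adjacent to $A$ or to $B$ (every maximal outerplanar graph with $\geq 2$ internal vertices has at least two ears, and the outer cycle $a, \dots, b$ forces at least one ear to be incident to $A$ or to $B$ in $G$, since the vertices neighboring $a$ and $b$ along the path are natural candidates). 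The chord-freeness hypothesis on neighbors of $A$ and of $B$ is exactly what guarantees that removing such an ear $v$ keeps the graph in the hypothesis class: no new chord is created between neighbors of $A$ or $B$, $G - v$ still has outerface $ABba$, and $G - v - \{A,B\}$ is still maximal outerplanar.

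The inductive step is geometric: by the induction hypothesis $G - v$ has a representation $\Gamma'$ that is a rectangle whose four sides are the strips for $A, B, b, a$. I want to carve a square for $v$ out of a corner of $\Gamma'$ so that the resulting shape is again a rectangle. Since $v$ was an ear incident to (say) $A$, in $\Gamma'$ the two neighbors $x, y$ of $v$ occupy consecutive positions along the $A$-side of the boundary, and one of them — the one that is also the extreme vertex along the $a$- or $b$-side — touches a corner of the rectangle. I place a square $S_v$ for $v$ in that corner, sized so its side length matches the combined extent it must cover along the $A$-side to touch both $x$ and $y$ (and to touch $A$ and, if applicable, $b$ or $a$). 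The key point is that I then \emph{rescale} $\Gamma'$: squares scale to squares, so stretching everything by a uniform factor keeps all inner faces square, and by choosing the factor correctly I make the slot left by $S_v$ exactly fit, turning $\Gamma \cup S_v$ back into a rectangle with the four sides still representing $A, B, b, a$. The base case is a single internal vertex $v$ adjacent to $a, b$ and to both $A, B$: represent it by any square with the four sides of its boundary assigned to $A, B, b, a$.

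The main obstacle I anticipate is the bookkeeping that makes the "carve a corner and rescale" step actually produce a rectangle rather than an L-shape: I need the ear $v$ to be removable so that its two path-neighbors $x, y$ sit flush against one full edge of the current bounding rectangle (not straddling a corner in a bad way), and I need to handle the cases according to whether $v$ is adjacent to one or both of $A, B$ — adjacency to both forces $v$'s square to span corner-to-corner along a whole side, which only happens at the base case, while adjacency to exactly one of them is the generic inductive case. Verifying that the chord-freeness hypothesis rules out the bad configurations (e.g. a neighbor of $A$ being "trapped" between two other neighbors of $A$ on the boundary, which would prevent a clean corner placement) is the delicate part; once that combinatorial invariant is pinned down, the rescaling argument is routine. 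I would also double-check the orientation conventions so that across the induction the labels $A, B, b, a$ stay attached to the correct four sides.
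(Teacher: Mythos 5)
Your overall plan (induct, peel one vertex, use the chord condition to stay in the hypothesis class) has the right shape, but both the choice of vertex to peel and the geometric insertion step have genuine gaps. Combinatorially: under the hypothesis, an internal vertex $v$ that is an ear of $H=G-\{A,B\}$ and is adjacent to exactly one of $A,B$ cannot exist at all, because its two ear-neighbours $x,y$ are then both adjacent to that same vertex of $\{A,B\}$ (the region between $H$ and the outerface is triangulated) and the ear edge $(x,y)$ is a chord between two such neighbours, which the hypothesis forbids. So the only internal-vertex ear you could ever pick is the unique common neighbour of $A$ and $B$, and deleting it breaks the induction. Concretely, let $H$ be the $4$-cycle $a,u_1,u_2,b$ with chord $(u_1,b)$, with $A$ adjacent to $a,u_1,u_2$ and $B$ adjacent to $u_2,b$: the only internal-vertex ear is $u_2$, and in $G-u_2$ the sole remaining internal vertex $u_1$ is not adjacent to $B$, so $G-u_2$ admits no layout in which all four sides $A,B,b,a$ touch exactly their neighbours --- the inductive hypothesis cannot be applied in the form you need. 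Geometrically: appending a square at a corner of the rectangle $\Gamma'$ produces an L-shape, and a \emph{uniform} rescaling of $\Gamma'$ preserves its aspect ratio and that of every square inside it, so there is no free parameter left to make the carved slot ``exactly fit.'' You correctly flag this as the main obstacle, but the proposal never supplies a mechanism that restores rectangularity, and that is precisely the hard part of the lemma.

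The paper resolves both issues with one idea absent from your proposal: it peels $a$ or $b$ --- a boundary vertex, which the chord condition forces to be an ear of $H$ --- rather than an internal vertex, and the square that gets inserted belongs to $u$, the apex of the inner triangle on the outer edge $(a,b)$. Since $u$ is adjacent to $a$, to $b$, and to (say) $A$, one recurses on $G-\{a\}$ with outerface $ABbu$, obtains a rectangle whose top side represents $u$, and then glues on top of it a square for $u$ spanning the \emph{entire width}; a full-width square slab appended to a rectangle is again a rectangle, its new top side represents $a$, and no rescaling of the recursive layout is needed. I suggest reorganizing your induction around the edge $(a,b)$ and its apex rather than around ears of $H$.
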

\begin{proof}
 We prove this lemma by induction on the number of vertices in $G$. Denote the
 maximal outerplanar graph $H=G-\{A,B\}$; see Fig.~\ref{fig:square}(a). If $G$ contains
 only one internal vertex $v$, then we compute $\Gamma$ by
 representing $v$ by a square $R(v)$ of arbitrary size and representing $A$, $B$, $b$
 and $a$ by the left, bottom, right and top sides of $R(v)$.

\begin{figure}[t]
	\centering
	\includegraphics[width=0.8\textwidth]{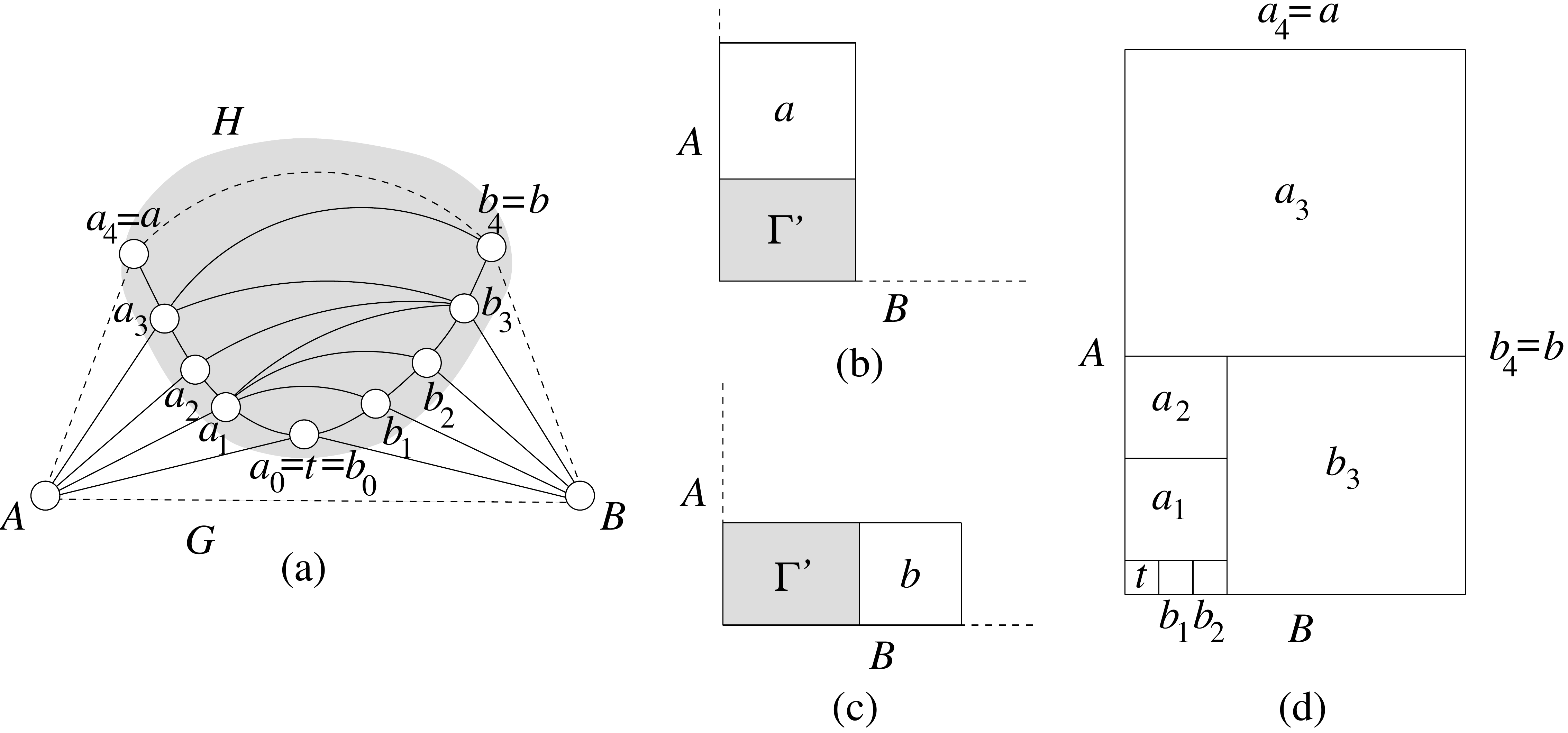}
	\caption{Illustration for the proof of Lemma~\ref{lem:square}.}
	\label{fig:square}
\end{figure}

 We thus assume that $G$ has at least two internal vertices. Let $u$ be the unique common
 neighbor of $\{a, b\}$ in $H$. If $u$ is a neighbor of $A$, then $H-\{a\}$ is a maximal
 outerplanar graph. By induction hypothesis, $G-\{a\}$ has a contact representation
 $\Gamma'$ where each internal vertex of $G-\{a\}$ is represented by a square and the left,
 bottom, right and top sides of $\Gamma'$ represent $A$, $B$, $b$ and $u$. Then we
 compute $\Gamma$ from $\Gamma'$ by adding a square $R(u)$ to represent $u$ such
 that $R(u)$ spans the entire width of $\Gamma'$ and is placed on top of $\Gamma'$; see
 Fig.~\ref{fig:square}(b). A similar construction can be used if $u$ is a neighbor of $B$; see
 Fig.~\ref{fig:square}(c).
 We thus compute a contact representation for $G$; see Fig.~\ref{fig:square}(d).
\end{proof}

\subsection{Cube-Contact Representation for Extended Pieces}


\begin{lemma}
	\label{lem:piece} Let $P$ be a piece of $G$ at level $l$ and $P'$ be the
	extended piece for $P$ with $(l-1)$-level vertices $A$, $B$, $C$. Then $P'$ has a
	cube-contact representation.
\end{lemma}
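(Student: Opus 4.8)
The plan is to prove, by induction on $|V(P)|$, a normal-form strengthening that applies to every \emph{abstract extended piece}: a maximal outerplanar graph $Q$ together with three mutually adjacent outer vertices $A,B,C$ and a fixed triangulation of the region between the triangle $ABC$ and the outer cycle of $Q$. The claim is that every such graph has a proper cube-contact representation whose cubes have an axis-aligned box $\mathcal{B}$ as their union, in which the cubes of $A$, $B$, $C$ are three large pairwise-touching cubes forming three ``walls'' of one corner of $\mathcal{B}$, and in which a vertex of $Q$ has a face flush against the $A$-wall if and only if it is adjacent to $A$ (and likewise for $B$ and $C$). Properness is what the global construction of Section~4 ultimately needs, and this box-with-a-corner normal form is exactly what lets a piece be slid into the empty corner that its parent leaves, as in the first proof of Theorem~\ref{th:p3t-box}; the three walls must overhang each other slightly so that the contacts $AB$, $BC$, $CA$ have positive area.

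The base case $|V(Q)|=3$ is the octahedron $K_{2,2,2}$ with antipodal pairs $\{A,a\},\{B,b\},\{C,c\}$, which one realises by hand: three mutually overhanging cubes for $A,B,C$ meeting at a corner $p$ together with three cubes $a,b,c$ placed in the three edge regions at $p$, arranged so that $a$ touches the $B$- and $C$-walls but not $A$ (cyclically for $b,c$) while $a,b,c$ touch pairwise --- which is realisable with genuine cubes by taking the six side lengths along a monotone chain. For the inductive step, recall that a maximal outerplanar graph on at least four vertices has at least two ears. If some ear tip $v$ of $Q$ is not one of the corner vertices $a,b,c$, then in the extended piece $v$ is adjacent to exactly one of $A,B,C$, so $v$ has degree three there with a triangular neighbourhood, and deleting it yields the extended piece of $Q-v$; by the inductive hypothesis this has a normal-form representation, and one obtains the representation of $Q$'s extended piece from it by dropping a sufficiently small cube for $v$ into the empty corner octant that the construction keeps available at the face $v$ subdivides --- the corner-stacking step of~\cite{BEF+12}, which restores the normal form.

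The main obstacle, and the reason Lemma~\ref{lem:square} is needed, is that an extended piece need not be a planar 3-tree: if every ear of $Q$ is one of its three corner vertices --- which already happens at $|V(Q)|=4$ --- the graph has minimum degree four, so there is no degree-three vertex to peel and corner stacking alone does not finish the job. To handle such pieces I would process $Q$ through its weak dual tree, cutting that tree at its degree-three nodes into maximal paths. Each path spans a sub-polygon of $Q$ that is a maximal outerplanar graph every face of which has an outer edge, so Lemma~\ref{lem:square} represents it by squares filling a rectangle; this planar picture is lifted to a single layer of cubes, with a wall for whichever of $A,B,C$ that layer meets placed flush against the corresponding side of the layer. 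Successive layers are glued along their shared branch triangles $[p,q,r]$ by placing the cubes $p,q,r$ so that they meet at a common corner and dropping the incident layers into the octants around it, again in the style of the planar 3-tree construction. The technical heart of the proof is to carry out this splicing so that every object stays a cube, every edge of $Q$ is witnessed by a positive-area contact, and the outline remains a box carrying $A,B,C$ on three of its faces; since each vertex of $Q$ is processed once and every step is a constant-work rescaling of the current box, the whole construction runs in linear time.
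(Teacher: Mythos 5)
Your outline assembles the right ingredients (the strip representation of Lemma~\ref{lem:square}, corner-stacking for degree-three vertices, and the correct observation that extended pieces need not be planar 3-trees so ear-peeling alone cannot finish), but it has two genuine gaps. First, the induction hypothesis is too strong to be provable: you require the cubes of the extended piece to have an axis-aligned box as their union, i.e., that the cubes of $P$ exactly tile the corner region left by the walls $A,B,C$. Already in 2D the square-tiling-a-rectangle conclusion of Lemma~\ref{lem:square} holds only for the special ``strip'' outerplanar graphs satisfying its chord conditions, not for arbitrary maximal outerplanar graphs, and in 3D an exact cube tiling would force adjacencies that a general piece does not have. What is actually needed (and all that the paper's construction delivers) is that the representation fits inside a box and exposes a face toward each of $A$, $B$, $C$ for exactly the vertices adjacent to them; as stated, your induction cannot close.

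Second, and decisively, the splicing at branch triangles --- which you yourself call the technical heart --- is asserted rather than proved, and it is exactly where the difficulty of the lemma lives. In your weak-dual decomposition a single vertex of $P$ can lie on many consecutive faces, hence belong to several strips and several branch triangles simultaneously, while \emph{every} vertex of $P$ must in addition remain flush against one of the three walls; reconciling all of these contacts with the rigidity of cubes is the whole problem, and ``dropping the incident layers into the octants around the corner'' does not address it. The paper escapes this by a case analysis that caps the number of wall-spanning strips at three: when there are no short chords, the extremal cross-chords $ab$, $bc$, $ca$ (or the fan at a common endpoint $s$) cut $P'$ into at most three quadrilateral regions satisfying Lemma~\ref{lem:square}, which are laid as pipes along the three walls; and every subgraph hanging inside a short chord has all its vertices adjacent to a single one of $A,B,C$, so it is an apex over a maximal outerplanar graph, i.e., a planar 3-tree, and can be tucked into a locally created corner by the algorithm of~\cite{BEF+12}. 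Your uniform treatment discards this structural observation and therefore must solve the general nested-splicing problem, which the proposal does not do.
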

\begin{proof} Let $r$ be a common neighbor of $B$ and $C$; $s$ a common neighbor of $A$
 and $C$; $t$ a common neighbor of $A$ and $B$. It is easy to find a contact
 representation of $P'$ if $r$, $s$ and $t$ are the only vertices of
 $P$, so let
 $P$ have at least four vertices. The outer cycle of $P$ can be
 partitioned into three paths: $P_a$ is the path from $s$ to $t$, $P_b$ is the path from $r$ to
 $t$ and $P_c$ is the path from $r$ to $s$. Note that all vertices on the path $P_a$ ($P_b$,
 $P_c$) are adjacent to $A$ ($B$, $C$). A chord $(u,v)$ is a \textit{short chord} if it is between
 two vertices on the same path from the set $\{P_a, P_b, P_c\}$. (Note that a chord between two
 vertices from the set $\{r,s,t\}$ is also a short chord.) We have the following two cases.

\smallskip\noindent
\textbf{Case A: There is no short chord in $P$.} In this case all the chords of $P$ are between two
 different paths. We consider the following two subcases.


\begin{figure}[tb]
	\centering
	\includegraphics[width=0.7\textwidth]{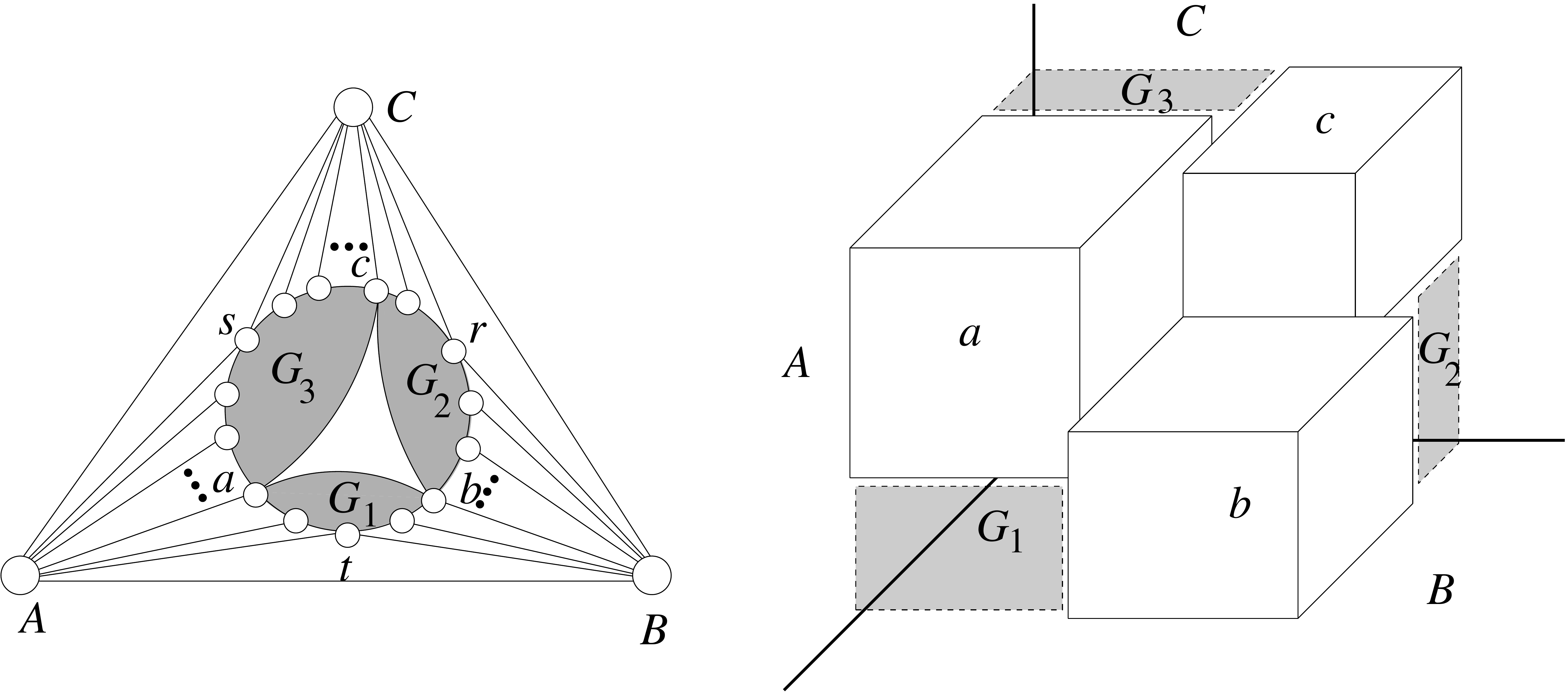}\\
	(a)\hspace{0.38\textwidth}(b)\hspace{0.05\textwidth}
	\caption{Illustration for \textbf{Case A1} in the proof of Lemma~\ref{lem:piece}.}
	\label{fig:cube-merge1}
\end{figure}

\textbf{Case A1: There is no chord with one end-point in $\{r, s, t\}$.} In this case, due to
 maximal-planarity there exist three vertices $a$, $b$ and $c$, adjacent to $A$, $B$, and $C$,
 respectively such that (i) $ab$ is the chord between vertices of $P_a$ and $P_b$ farthest
 away from $t$, (ii) $bc$ is the chord between vertices of $P_b$ and $P_c$ farthest away
 from $r$, and (iii) $ac$ is the chord between vertices of $P_a$ and $P_c$ farthest away
 from $s$; see Fig.~\ref{fig:cube-merge1}(a). We can then find three interior-disjoint
 subgraphs of $P'$ defined by three
 cycles of $P'$: $G_1$ is the one induced by
 all vertices on or inside $ABba$; $G_2$ is induced by all vertices on or inside $BCcb$;
 and $G_3$ is induced by all vertices on or inside $ACca$. Each of these subgraphs
 has the common property that if we delete two vertices from the outerface (two vertices from
 the set $\{A, B, C\}$ in each subgraph), we get an outerplanar graph.
From the representation with squares from the proof of Lemma~\ref{lem:square}, we find a contact
 representation of $G_i$, $i=1,2,3$ where each internal vertex of $G_i$ is represented by a cube
 and the union of all these cubes forms a rectangular box whose four sides realize the outer vertices.
 We use such a representation to obtain a contact representation of $P'$ with cubes as follows.

We draw pairwise adjacent cubes (of arbitrary size) for $A$, $B$,
$C$. We need to place the cubes for all the vertices of $P$ in the a
corner defined by three faces of the cubes for $A$, $B$, $C$. 
Then we place three mutually touching cubes for
 $a$, $b$ and $c$, which touch the walls for $A$, $B$ and $C$, respectively; see
 Fig.~\ref{fig:cube-merge1}(b). We also compute a contact representation of the internal
 vertices for each of the
 three graphs $G_1$, $G_2$ and $G_3$ with cubes using Lemma~\ref{lem:square}, so that
 the outer boundary for each of these representation forms a rectangular pipe. We adjust
 the sizes of the three cubes for $a$, $b$ and $c$ in such a way that the three highlighted
 rectangular pipes precisely fit these three representations (after some possible scaling).
 Note that this construction works even if one or more of the subgraphs $G_1$, $G_2$ and
 $G_3$ are empty. This completes the analysis of \textbf{Case A1}.

\begin{figure}[tb]
	\centering
	\includegraphics[width=0.7\textwidth]{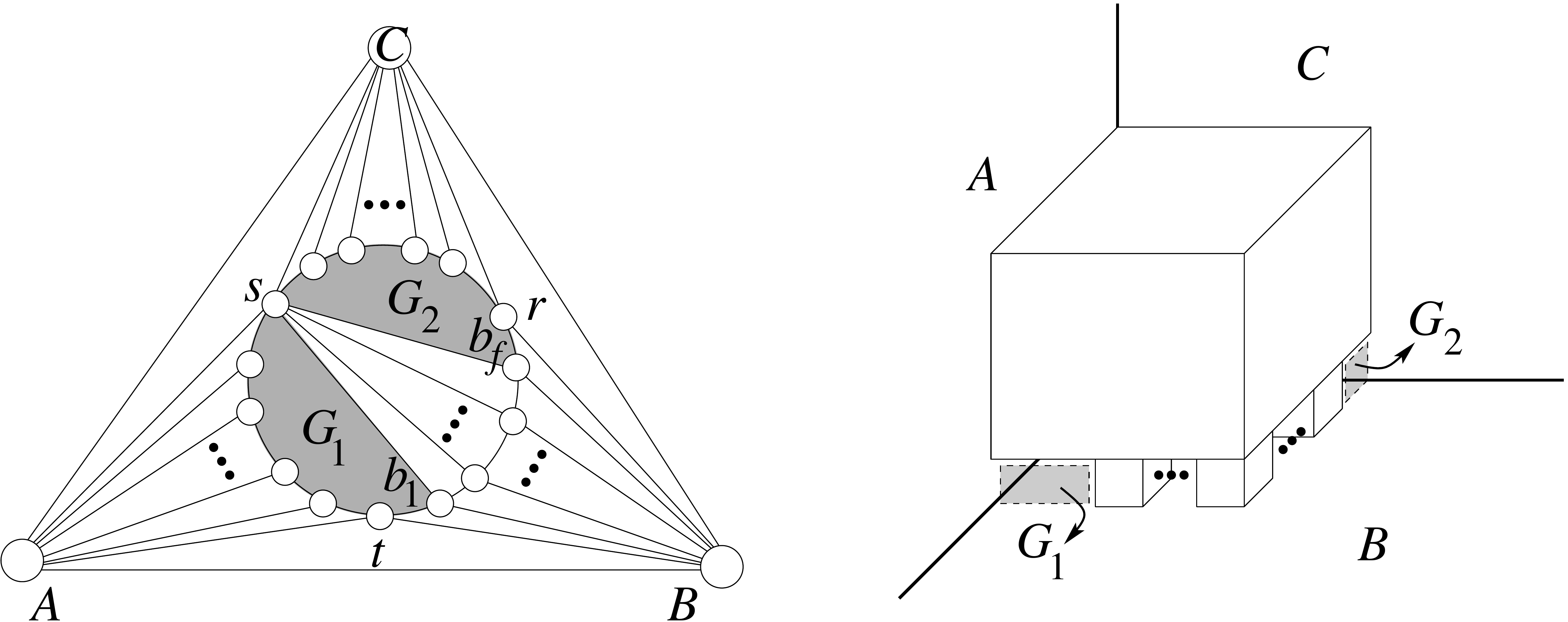}\\
	(a)\hspace{0.38\textwidth}(b)\hspace{0.05\textwidth}
	\caption{Illustration for \textbf{Case A2} in the proof of Lemma~\ref{lem:piece}.}
	\label{fig:cube-merge2}
\end{figure}

\textbf{Case A2: There is at least one chord with one end-point in $\{r, s, t\}$.} Due to planarity
 all such chords will have the same end-point in $\{ r,s,t\}$. Suppose $s$ is this common end
 point for these chords; see Fig.~\ref{fig:cube-merge2}(a).
 Let $b_1$ and $b_f$ be the first and last endpoints in the clockwise order of these chords
 around $s$. Then we can find two subgraphs $G_1$ and $G_2$
 induced by the vertices on or inside two separating cycles $ABb_1s$ and $BCsb_f$. We
 find contact representations for the internal vertices of these two graphs $G_1$ and $G_2$
 using Lemma~\ref{lem:square} so that the outer-boundaries of these representation form
 rectangular pipes. We then obtain the desired contact representation for $P'$, starting with the
 three mutually touching walls for $A$, $B$ and $C$ at right angles
 from each other, placing
 the cubes for $s$ and $b_1$, $\ldots$, $b_f$ as illustrated in
 Fig.~\ref{fig:cube-merge2}(b), and fitting the representations for $G_1$ and $G_2$ (after some
 possible scaling) in the highlighted regions.


\smallskip\noindent
\textbf{Case B: there are some shord chords in $P$.}
 In this case, we find at most four subgraphs from $P'$ as follows. At each path in
 $\{ P_a, P_b, P_c\}$, we find the \textit{outermost chord},
 i.e., one that is not contained inside any other chords on the same path.
 Suppose these chords are $a_1a_2$,
 $b_1b_2$ and $c_1c_2$, on the three paths $P_a$, $P_b$, $P_c$, respectively.
 Then three of these subgraphs $G_a$, $G_b$ and $G_c$ are  induced by the vertices
 on or inside the three
 triangles $Aa_1a_2$, $Bb_1b_2$ and $Cc_1c_2$.
 The fourth subgraph $P^*$ is obtained from $P'$ by deleting all the inner
 vertices of the three graphs $G_a$, $G_b$ and $G_c$; see Fig.~\ref{fig:short-chord}.

\begin{figure}[t]
	\centering
	\includegraphics[width=0.9\textwidth]{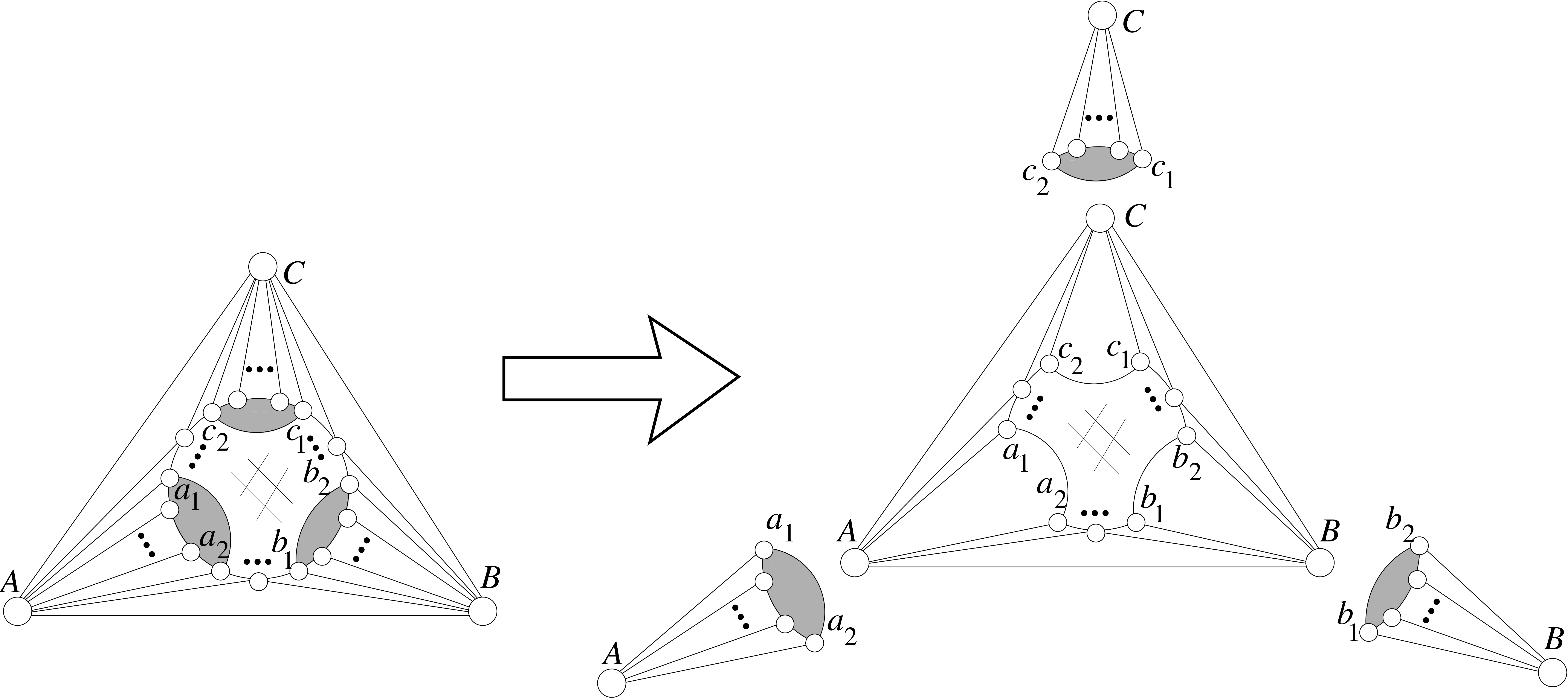}
	\caption{Removing chords with end-vertices in the same neighborhood.}
	\label{fig:short-chord}
\end{figure}

 A cube representation of $P^*$ can be found by the algorithm in \textbf{Case A}, as $P^*$
 fits the condition that there is no chord between any two neighbors of the same vertex in
 $\{A, B, C\}$. Note that by moving the cubes in the representation by an arbitrarily
 small amount, we can make sure that for each triangle $xyz$ in $P^*$, the three cubes for $x$,
 $y$ and $z$ form a corner surrounded by three mutually touching walls at right angles to
 each other. Now observe that each of the three graphs $G_a$, $G_b$ and $G_c$ is a planar
 3-tree; thus using the algorithm of either~\cite{BEF+12} or~\cite{FF11}, we can
 place the internal vertices of these three graphs in their corresponding corners, thereby
 completing the representation.
\end{proof}

\subsection{Cube-Contact Representation for a Nested Maximal Outerplanar Graph}

\noindent
\textbf{Proof of Theorem~\ref{th:nested}:} Let $G$ be a nested maximal outerplanar graph.
 We build the contact representation of $G$ by a top-down traversal of the rooted tree
 $\mathcal{T}$ of the pieces of $G$.
 We start by creating a corner surrounded by three mutually touching walls at right angle
 to each other. Then whenever we traverse any vertex of $\mathcal{T}$, we realize the
 corresponding piece $P$ at level $l$ by obtaining a representation using
 Lemma~\ref{lem:piece} and placing  this in the corner created by the three already-placed
 cubes for the three $(l-1)$-level vertices adjacent to $P$ (after possible scaling). \qed


\section{Proportional Box-Contacts for Nested Outerplanar Graphs}

In this section we prove the following main theorem.

\begin{theorem}
\label{th:nested-prop}
	Let $G=(V,E)$ be a nested outerplanar graph and let $w:V\rightarrow\mathbb{R}^+$
	 be a weight function defining weights for the vertices of $G$. Then $G$ has a
	 proportional contact representation with axis-aligned boxes with respect to $w$.
\end{theorem}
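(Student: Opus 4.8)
The plan is to mirror the structure of the cube-contact proof (Theorem~\ref{th:nested}) but to replace each ``draw cubes/squares'' step with its proportional analogue, which is available because the underlying pieces are (maximal) outerplanar and, in the chord-decomposed subcases, planar 3-trees. First I would observe that the decomposition of $G$ into pieces and the rooted tree $\mathcal{T}$ of pieces, together with Observation~\ref{obs:extended} (and its obvious analogue for nested outerplanar, rather than nested maximal outerplanar, graphs), still applies: each piece $P$ at level $l$ has exactly three $(l-1)$-level neighbors $A$, $B$, $C$, and the extended piece $P'$ is obtained by adding these three vertices. So it suffices to (i) build a proportional box-contact representation of each extended piece $P'$ inside a prescribed axis-aligned box, with the boxes for $A$, $B$, $C$ occupying three mutually incident faces of that box and with enough ``slack'' to be rescaled in the third dimension, and then (ii) glue the pieces together by a top-down traversal of $\mathcal{T}$, exactly as in the proof of Theorem~\ref{th:nested}: whenever we visit a piece $P$ at level $l$, the corner formed by the three already-placed boxes for its $(l-1)$-level neighbors is available, and we drop the (rescaled) representation of $P'$ into that corner.

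The heart of the argument is therefore step (i): a proportional analogue of Lemmas~\ref{lem:square} and~\ref{lem:piece}. For the weighted version of Lemma~\ref{lem:square} I would argue by the same induction on the number of internal vertices of $H = G - \{A,B\}$, peeling off the unique common neighbor $u$ of the two ``free'' boundary vertices $a,b$: instead of a square spanning the full width, we add a box spanning the full width of the current representation with its height chosen so that its volume equals $w(u)$, stacking it on the appropriate side. Since box dimensions are only constrained by one volume equation, the two free dimensions can always be chosen to make the box fit flush against $\Gamma'$; this is exactly the ``choose the surface area arbitrarily large and fix the volume by the third dimension'' trick already used in the first proof of Theorem~\ref{th:p3t-box}. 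For Lemma~\ref{lem:piece} I would follow the same case analysis (Case A: no short chord, split into A1 and A2; Case B: outermost short chords carve off planar 3-tree subgraphs $G_a,G_b,G_c$ and a core $P^*$). In Case~A the pieces $G_1,G_2,G_3$ satisfy the hypothesis of the weighted Lemma~\ref{lem:square}, so each gets a proportional box representation whose outer boundary is a rectangular box; we place proportional boxes for $a,b,c$ in the corner, and rescale the three ``pipes'' to fit — the rescaling is harmless because volume is a single constraint and we can absorb any scaling factor in one dimension, exactly as in Theorem~\ref{th:p3t-box}. In Case~B, the subgraphs $G_a,G_b,G_c$ are planar 3-trees, so Theorem~\ref{th:p3t-box} (rather than~\cite{BEF+12}) supplies proportional box representations to drop into the corners carved out inside $P^*$.

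The main obstacle I anticipate is bookkeeping the rescalings so that the global representation simultaneously realizes every prescribed volume. The danger is that when we rescale the representation of a sub-piece to fit into a corner, we change the volumes of all its boxes by the scaling factor, and nested corners could compound these factors. The clean way around this is the same device as in the first proof of Theorem~\ref{th:p3t-box}: do the construction bottom-up to fix \emph{shapes} (ratios) only, carrying at each node of $\mathcal{T}$ a single accumulated scaling factor, and then do one final top-down pass that assigns absolute coordinates so that the box of each vertex $v$ ends up with volume exactly $w(v)$; because each vertex's box has two free dimensions after its volume is fixed, and because at every corner we are free to make the contact faces as large as needed (compensating in the third dimension), no inconsistency can arise. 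The linear-time claim, if wanted, then follows as in Theorem~\ref{th:p3t-box}: each piece is processed once, and the accumulated scaling factor is propagated in constant time per node.

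\begin{proof}[Proof of Theorem~\ref{th:nested-prop}]
(Sketch.) We follow the plan above. Let $\mathcal{T}$ be the rooted tree of pieces of $G$. By the definition of nested outerplanar graph, each piece $P$ at level $l>0$ has exactly three $(l-1)$-level neighbors $A$, $B$, $C$; let $P'$ be the extended piece obtained by adding $A$, $B$, $C$ and the edges joining them to the vertices of $P$. We first prove a proportional analogue of Lemma~\ref{lem:square}: if $G$ has outerface $ABba$, at least one internal vertex, $G-\{A,B\}$ is a maximal outerplanar graph, and there is no chord between two neighbors of $A$ or between two neighbors of $B$, then $G$ has a proportional box-contact representation in which each internal vertex $v$ is a box of volume $w(v)$, the union of these boxes is an axis-aligned box, and its four lateral faces represent $A$, $B$, $b$, $a$. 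This is proved by induction on the number of internal vertices exactly as in Lemma~\ref{lem:square}, peeling off the unique common neighbor $u$ of $a$ and $b$ in $H=G-\{A,B\}$ and adjoining, in place of a square spanning the full width of the sub-representation $\Gamma'$, a box spanning the full width of $\Gamma'$ whose remaining dimension is chosen so that its volume is $w(u)$; this is always possible since only the volume is prescribed, so the free dimensions may be taken arbitrarily and a single dimension adjusted to realize the weight.

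Next we prove the proportional analogue of Lemma~\ref{lem:piece}: every extended piece $P'$ with $(l-1)$-level vertices $A$, $B$, $C$ has a proportional box-contact representation inside a prescribed axis-aligned box, with the boxes for $A$, $B$, $C$ occupying three mutually incident faces. We repeat the case analysis of Lemma~\ref{lem:piece}. In Case~A (no short chord), the subgraphs $G_1,G_2,G_3$ satisfy the hypothesis of the weighted Lemma above, so each has a proportional box representation whose outer boundary is an axis-aligned box; we place proportional boxes for $a$, $b$, $c$ (and, in Case~A2, for $s$ and $b_1,\dots,b_f$) in the corner formed by the walls for $A$, $B$, $C$, choosing their free dimensions so the three rectangular ``pipes'' receive the three sub-representations after uniform rescaling. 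Rescaling a sub-representation by a factor multiplies every one of its box volumes by that factor, so we postpone the realization of absolute volumes to the end. In Case~B, the outermost short chords on $P_a$, $P_b$, $P_c$ carve off three planar 3-trees $G_a,G_b,G_c$ and a core $P^*$ to which Case~A applies; we represent $P^*$ as above, perturb so that every triangle of $P^*$ yields a corner of three mutually touching walls, and drop into those corners proportional box representations of $G_a,G_b,G_c$ furnished by Theorem~\ref{th:p3t-box}.

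Finally we assemble the global representation by a top-down traversal of $\mathcal{T}$, as in the proof of Theorem~\ref{th:nested}: starting from three mutually touching walls, whenever we visit a piece $P$ at level $l$ we place (after rescaling) the representation of $P'$ in the corner defined by the already-placed boxes for the three $(l-1)$-level vertices adjacent to $P$. As in the first proof of Theorem~\ref{th:p3t-box}, we carry out the construction bottom-up to fix only the shapes, recording at each node of $\mathcal{T}$ the accumulated scaling factor, and then perform one final top-down pass assigning absolute coordinates so that each vertex $v$ is realized by a box of volume exactly $w(v)$; since each box has two free dimensions once its volume is fixed, and every contact face may be enlarged freely while the third dimension compensates, no inconsistency arises. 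Hence $G$ has a proportional box-contact representation with respect to $w$.
\end{proof}
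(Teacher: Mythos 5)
Your overall skeleton (top-down traversal of $\mathcal{T}$, one construction per extended piece) matches the paper's, but the core of your argument --- directly ``proportionalizing'' Lemmas~\ref{lem:square} and~\ref{lem:piece} and deferring all volume bookkeeping to a final rescaling pass --- has a genuine gap, and it is exactly the gap the paper is built to avoid (the authors explicitly remark that it is ``not clear'' that per-piece proportional representations obtained this way can be combined). In the cube construction of Lemma~\ref{lem:piece}, the sub-representation of each $G_i$ is scaled isotropically to fill a rectangular pipe whose dimensions are dictated by the already-placed boxes for $a$, $b$, $c$ and the walls for $A$, $B$, $C$; isotropic scaling preserves cubes, but it multiplies every volume by $\lambda^3$, so after fitting, the boxes inside a pipe no longer realize $w$ relative to the boxes for $a,b,c$, whose volumes are pinned at $w(a),w(b),w(c)$. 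Your proposed fix --- fix only shapes bottom-up and recover absolute volumes in one top-down pass --- asks a \emph{single} scale factor per piece to reconcile two independent constraints: the fitting constraint (pipe/corner size, determined by the parent's already-fixed geometry) and the volume constraint (determined by $w$); these generically conflict. The ``enlarge the contact face and compensate in the third dimension'' trick from the first proof of Theorem~\ref{th:p3t-box} does not rescue you here: it works there because the three boxes forming each corner are the \emph{outer} boxes of the current subproblem and own entire faces of the bounding region, so they can be made arbitrarily thin and wide without disturbing anything else. In your construction the triangle $\{A,B,C\}$ hosting a child piece is an arbitrary internal triangle of the parent piece, and the sub-representations are wedged into pipes on all sides, so the boxes do not retain a free dimension; the blanket claim that ``every contact face may be enlarged freely while the third dimension compensates'' is true for one box in isolation but false in a packed configuration.

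The paper's actual proof takes a different route precisely for this reason: it never builds 3D pipes. It proves two new lemmas (Lemmas~\ref{lem:stair} and~\ref{lem:double-stair}) giving proportional \emph{rectangle} layouts of maximal outerplanar graphs in which the relevant corner of every rectangle is exposed on a staircase, extrudes these to boxes whose heights remain free parameters, and maintains an invariant on the T-shape formed by every triangle together with the $z$-order of the tops of its three boxes. A child piece is then attached at the T-shape corner of its parent triangle: its $xy$-footprint can be shrunk at will and every prescribed volume is realized purely by adjusting heights, because the $+z$ direction above the layout is always unobstructed. To salvage your plan you would need to establish an analogue of this ``one globally free dimension per box'' property; as written, the volume constraints cannot all be met.
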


We construct a proportional representation for $G$ using a similar strategy as in the previous
 section: we traverse the construction tree $\mathcal{T}$ of $G$ and deal with each piece
 of $G$ separately. Each piece $P$ of $G$ is an outerplanar graph and hence one can easily
 construct a proportional box-contact representation for $P$ as follows. Any outerplanar
 graph $P$ has a contact representation with rectangles in the plane. In fact in~\cite{ourAlg13},
 it was shown that $P$ has a contact representation with rectangles
 on the plane where the rectangles realize prespecified weights by
 their areas. Thus by giving unit heights to all rectangles we can
 obtain a proportional box-contact representation of $P$ for any given
 weight function. 
However if we construct proportional box-contact representation for each piece of $G$ in this way,
it is not clear that we can combine them all to find a proportional contact representation of the
 whole graph $G$. 
Instead, we use this construction idea in Lemmas~\ref{lem:stair}~and~\ref{lem:double-stair} to
build two different proportional rectangle-contact representations for
outerplanar graphs
 and we use them in the proof of Theorem~\ref{th:nested-prop}. 

Suppose $O$ is an outerplanar graph and $\Gamma$ is a contact representation of $O$ with rectangles
 in the plane. We say that a corner of a rectangle in $\Gamma$ is \textit{exposed} if it is
 on the outer-boundary of $\Gamma$ and is not shared with any other rectangles.

\begin{lemma}
\label{lem:stair}
	Let $O$ be a maximal outerplanar graph with a weight function $w$. Let
	$1$, $\ldots$, $n$ be the clockwise order of the vertices around the
	outer-cycle. Then a proportional rectangle-contact representation $\Gamma$ of
	$O$ for $w$ can be computed so that rectangle $R_1$
	for $1$ is leftmost in $\Gamma$, rectangle $R_n$ for $n$ is
        bottommost in $\Gamma - R_1$, and the top-right corner for each rectangle is exposed in $\Gamma$.
\end{lemma}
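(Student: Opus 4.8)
The plan is to prove Lemma~\ref{lem:stair} by induction on the number of vertices $n$ of the maximal outerplanar graph $O$, building the representation incrementally by peeling off an ``ear'' of the outer cycle. Recall that in a maximal outerplanar graph with vertices $1,\ldots,n$ in clockwise order around the outer cycle, every internal face is a triangle, and there is a unique vertex $u$ adjacent to both $1$ and $n$ (the apex of the triangle on the outer edge $(1,n)$). The base case is a single triangle: place three rectangles sharing a common corner, say $R_1$ as a tall thin rectangle on the left, $R_n$ as a flat wide rectangle on the bottom, and the third rectangle filling the remaining corner, scaled so that each realizes its prescribed area; one checks directly that $R_1$ is leftmost, $R_n$ is bottommost in $\Gamma - R_1$, and all three top-right corners are exposed.

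For the inductive step I would identify the apex $u$ of the triangle on edge $(1,n)$. There are two cases depending on whether $u = 2$ or $u = n-1$ (equivalently, whether the ear to remove is at the ``front'' or the ``back'' of the ordering), mirroring the case analysis in the proof of Lemma~\ref{lem:square}. Say $u = 2$: then $O - \{1\}$ is again a maximal outerplanar graph with outer order $2,\ldots,n$, and by induction it has a representation $\Gamma'$ with $R_2$ leftmost, $R_n$ bottommost in $\Gamma' - R_2$, and every top-right corner exposed. I then attach $R_1$ along the entire left side of $\Gamma'$: give $R_1$ full height equal to the height of $\Gamma'$ and width $w(1)/\mathrm{height}(\Gamma')$, so its area is $w(1)$ and it becomes the new leftmost rectangle, touching exactly $R_2$ (since $1$'s only neighbors in $O$ other than $n$ lie among the vertices whose rectangles meet the left boundary of $\Gamma'$ — and here one must be slightly careful, so that $R_1$ touches precisely the neighbors of $1$). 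The symmetric case $u = n-1$ attaches $R_n$ along the entire bottom of the representation for $O - \{n\}$, giving it full width and the appropriate height; this makes $R_n$ the bottommost rectangle, and since $R_1$'s rectangle was leftmost before and spans only part of the left side, attaching $R_n$ along the bottom keeps $R_1$ leftmost while making $R_n$ bottommost in $\Gamma - R_1$. In both cases the newly added rectangle spans a full side, so its own top-right corner is exposed, and the top-right corners of all previously placed rectangles remain exposed because we only extended the picture outward along one side.

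The main obstacle I anticipate is verifying that the attachment produces exactly the contacts prescribed by the edges of $O$ and no extras — i.e., that when $R_1$ is glued along the whole left edge of $\Gamma'$, it meets the rectangle $R_v$ for a boundary vertex $v$ if and only if $(1,v) \in E(O)$. In a maximal outerplanar graph the neighbors of $1$ form a contiguous fan $n = v_0, v_1, \ldots, v_k = 2$, and the inductive construction must guarantee that precisely the rectangles $R_{v_0},\ldots,R_{v_k}$ (or rather their relevant portions) appear consecutively along the left boundary of $\Gamma'$. This is really a statement that the induction should carry a strengthened hypothesis about which rectangles occupy the left and bottom boundaries and in what order; I would fold this bookkeeping into the inductive invariant (left boundary read top-to-bottom lists exactly the fan of $1$, bottom boundary read left-to-right lists exactly the fan of $n$), and discharge it using the planar structure of the ear decomposition. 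Once that invariant is maintained, the area/weight condition is trivially satisfied by the free choice of thickness in the added strip, and the whole construction runs in linear time by the same top-down coordinate pass used in Theorem~\ref{th:p3t-box}.
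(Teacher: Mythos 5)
Your inductive step has a genuine gap: you assume that the apex $u$ of the triangle on the outer edge $(1,n)$ satisfies $u=2$ or $u=n-1$, i.e., that vertex $1$ or vertex $n$ is an ear (a degree-two vertex) that can be peeled off. This is false for general maximal outerplanar graphs. Take $n=5$ with chords $(1,3)$ and $(3,5)$: the triangle on $(1,5)$ is $\{1,3,5\}$, so $u=3$, and both $1$ and $5$ have degree three. Neither of your two cases applies, and there is no ear at either end of the ordering to remove. The analogous two-case split does work in Lemma~\ref{lem:square}, which you cite as your model, but only because of the extra hypothesis there that no two neighbors of $A$ (resp.\ $B$) are joined by a chord; that hypothesis forces the common neighbor of $a$ and $b$ to be adjacent to $A$ or to $B$. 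Lemma~\ref{lem:stair} carries no such hypothesis, so the ear-peeling induction does not go through as stated.

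The paper's proof avoids this by divide-and-conquer rather than ear removal: it splits $O$ at the apex $x$ of the triangle on $(1,n)$ into the two maximal outerplanar graphs $G[1,x]$ and $G[x,n]$, recurses on \emph{both}, draws $R_1$, $R_x$, $R_n$ explicitly with their prescribed areas (choosing aspect ratios so that the two corner regions are large enough), and nests the two sub-layouts, minus their extreme rectangles, into the corner between $R_1$ and $R_x$ and the corner between $R_x$ and $R_n$. If you wanted to salvage an incremental argument you could peel an arbitrary ear $v$ with $1<v<n$ and insert $R_v$ at the exposed corner between $R_{v-1}$ and $R_{v+1}$, but then realizing the prescribed area $w(v)$ within the bounded free space at that corner becomes the new difficulty; the top-down recursion sidesteps this because sizes are fixed before the sub-layouts are scaled to fit. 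Your remaining bookkeeping concern (that after deleting $R_1$, resp.\ $R_n$, the exposed left, resp.\ bottom, boundary consists exactly of the rectangles of the neighbors of $1$, resp.\ $n$) is legitimate and is needed, and maintained implicitly, in the paper's recursion as well.
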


\begin{proof}
We give an algorithm that recursively computes $\Gamma$. Constructing $\Gamma$ is easy
 when $G$ is a single edge $(1,n)$. We thus assume that $G$ has at least 3 vertices. Let $x$
 be the (unique) third vertex on the inner face that is adjacent to $(1,n)$. Then graph $G$ can
 be split into two graphs at vertex $x$ and edge $(1,n)$: $G[1,x]$ consists of the graph induced
 by all vertices between $1$ and $x$ in clockwise order around the outer-cycle; while $G[x,n]$
 consists of the graph induced by the vertices between $x$ and $n$.


Recursively draw $G[1,x]$ and remove the rectangles for $1$ and $x$ from it; call the result
 $\Gamma_1$. Again recursively draw $G[x,n]$ and remove $x$ and $n$ from it; call the result
 $\Gamma_2$. Now draw a rectangle $R_x$ for $x$ with area $w(x)$. Let $l_x$ and $h_x$ be
 the width and height of $R_x$, respectively. Then draw the rectangles $R_1$ and $R_n$ for
 $1$ and $n$ touching the left and the bottom sides of $R_x$, respectively with necessary areas.
 Select the widths and heights of these two rectangles such that the area $l_x(h_1-h_n-h_x)$
 can contain $\Gamma_1$ while the area $(w_n-w_x)*h_x$ can contain $\Gamma_2$, where
 $l_j$ and $h_j$ denote the width and height of $R_j$, respectively for $j\in\{1,n\}$. Finally
 place $\Gamma_1$ (after possible scaling) touching the right side of $R_1$ and the top side
 of $R_x$ and place $\Gamma_2$ (after possible scaling) touching the right side of $R_x$ and
 the top side of $R_n$ to complete the drawing; see Fig.~\ref{fig:stair}.
\end{proof}

\begin{figure}[tb]
\hfill
\begin{minipage}[b]{.4\textwidth}
	\centering
	\includegraphics[width=\textwidth]{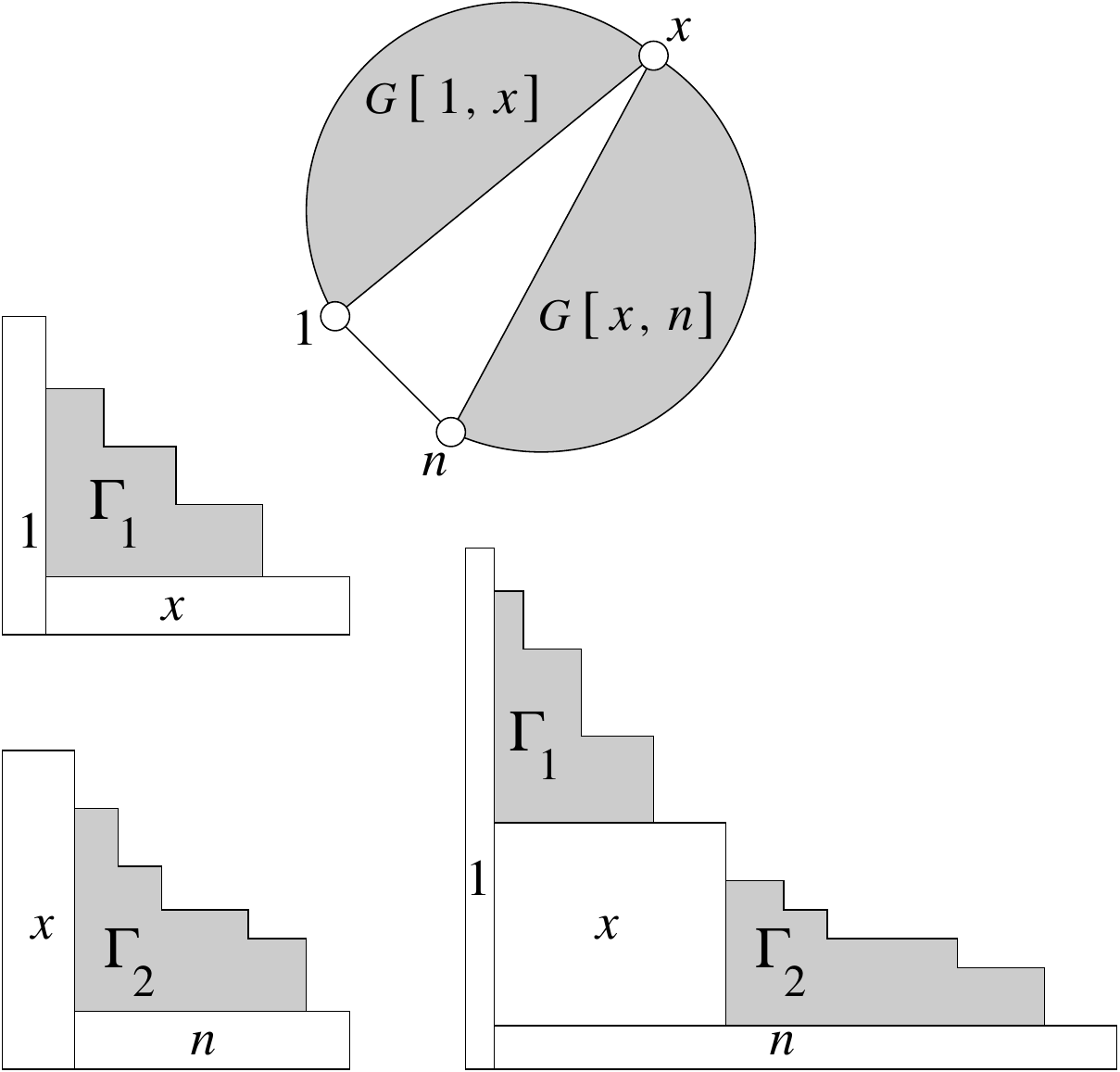}
	\caption{Illustration for the proof of Lemma~\ref{lem:stair}.}
	\label{fig:stair}
\end{minipage}
\hfill
\begin{minipage}[b]{.4\textwidth}
	\centering
	\includegraphics[width=\textwidth]{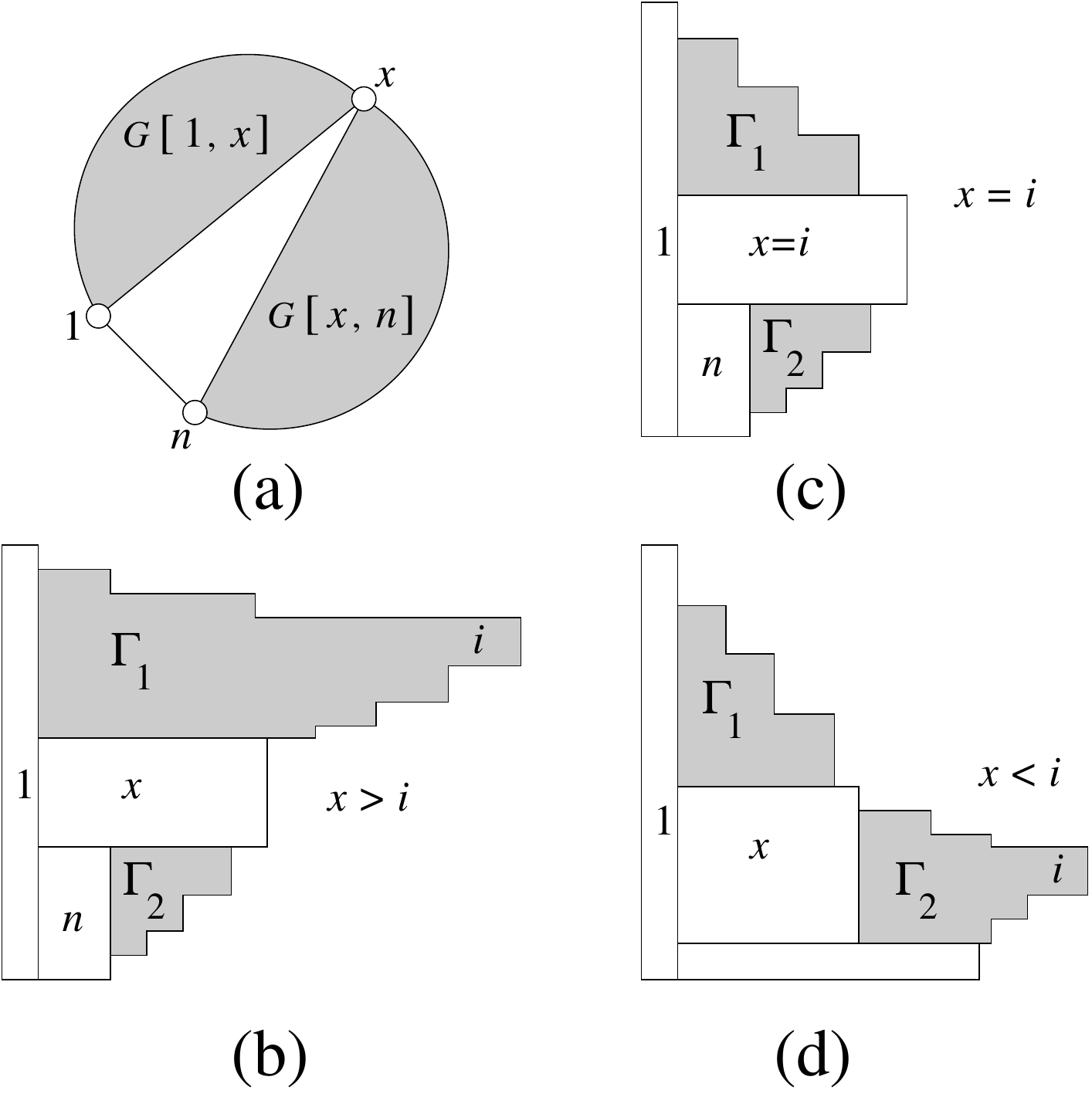}
	\caption{Illustration for the proof of Lemma~\ref{lem:double-stair}.}
\label{fig:double-stair}
\end{minipage}
\hfill
\end{figure}

 Note that in the layout obtained above
the top right corners of the rectangles for vertices $\{1, \ldots, n\}$
 have increasing $x$-coordinates and decreasing
 $y$-coordinates. Thus we refer to them as \textbf{Staircase} layouts
 and to the algorithm as the \textbf{Staircase
 Algorithm}.

\begin{lemma}
\label{lem:double-stair}
	Let $O$ be a maximal outerplanar graph with a weight function $w$. Let
	$1$, $\ldots$, $n$ be the clockwise order of the vertices
        around the outer-cycle.
Then a proportional rectangle-contact representation  $\Gamma$ of $O$
for $w$ can be computed so that rectangle $R_1$ for $1$ is
leftmost in $\Gamma$, rectangle $R_n$ for $n$ is bottommost in
$\Gamma-R_1$, and the top-right corners of all rectangles
	for vertices $\{1, \ldots, i\}$ and the bottom-right corners of all rectangles for vertices
	$\{i, \ldots, n\}$ are exposed in $\Gamma$.
\end{lemma}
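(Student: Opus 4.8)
The plan is to run the recursion underlying the \textbf{Staircase Algorithm} (the proof of Lemma~\ref{lem:stair}), but to carry the split index $i$ as an extra parameter and steer it into the correct subproblem at each step. Concretely, I would prove the statement by induction on $n=|V(O)|$, establishing it for all $i\in\{1,\ldots,n\}$ simultaneously. Lemma~\ref{lem:stair} is then exactly the case $i=n$, and the symmetric instance $i=1$ (all bottom-right corners exposed) drops out of the same induction, so no separate building block is required. The base case $n=2$ is a single weighted edge and is immediate.

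For the inductive step, let $x$ be the unique third vertex on the inner face incident to the edge $(1,n)$, so that $2\le x\le n-1$ and the chords $(1,x)$ and $(x,n)$ are present; as in Lemma~\ref{lem:stair}, $O$ splits at $x$ into the smaller maximal outerplanar graphs $O[1,x]$ (outer cycle $1,2,\ldots,x$) and $O[x,n]$ (outer cycle $x,x+1,\ldots,n$). I would recursively build a representation of $O[1,x]$ with split index $\min(i,x)$ and a representation of $O[x,n]$ with split index $\max(i,x)$, and then delete from each the rectangles for the two shared vertices, obtaining $\Gamma_1$ (delete $R_1$ and $R_x$) and $\Gamma_2$ (delete $R_x$ and $R_n$). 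The point of the $\min(i,x)/\max(i,x)$ choice is that whichever of $O[1,x]$, $O[x,n]$ contains $i$ receives the genuine double-staircase instance, while the other receives a pure forward or reversed staircase; in each of the cases $i<x$, $i=x$, $i>x$ one checks directly that the union of the two inherited exposed-corner sets is exactly ``top-right for $\{1,\ldots,i\}$, bottom-right for $\{i,\ldots,n\}$'', with vertex $i$ (and, at the boundary, vertex $x$) contributing corners of both kinds from the sub-instance straddling it.

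The assembly copies Lemma~\ref{lem:stair}: draw $R_x$ with area $w(x)$, attach $R_1$ with area $w(1)$ along its left side and $R_n$ with area $w(n)$ along its bottom side, and then place $\Gamma_1$ and $\Gamma_2$, after uniform scalings, in the two rectangular slots cut out by $R_1,R_x,R_n$ — in the positions used in Lemma~\ref{lem:stair} when the piece glued there is a forward staircase, and, when the piece is a reversed or double staircase, positioned so that its bottom-right exposed corners face empty space rather than $R_n$ or $R_x$. As in Lemma~\ref{lem:stair} there is room to do this: the side lengths of $R_1,R_x,R_n$ can be enlarged in whichever direction is needed to make each slot big enough for the scaled piece while still realizing the prescribed areas through the complementary side length, and a contact representation may leave gaps, so the staircase-shaped pieces need not tile their slots. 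It then remains to verify the three conclusions — $R_1$ leftmost, $R_n$ bottommost in $\Gamma-R_1$, and all the required top-right and bottom-right corners on the outer boundary — which reduces to choosing the sizes so that $R_1$'s top-right corner and $R_n$'s bottom-right corner protrude past both pieces and no exposed corner of $\Gamma_1$ or $\Gamma_2$ is covered by the other piece or by $R_1,R_x,R_n$. As usual the construction runs in linear time by storing a scaling factor at each recursion node and finishing with one top-down pass.

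The main obstacle is exactly this last bookkeeping. In Lemma~\ref{lem:stair} both glued pieces are forward staircases, whose exposed corners all lie on their upper boundary, so stacking them above $R_x$ and $R_n$ is harmless; here one slot may receive a reversed (or full double) staircase whose relevant exposed corners lie on its lower boundary, and naively stacking it above $R_n$ or $R_x$ would bury precisely those corners. One therefore has to keep $R_n$ (and $R_x$) narrow enough, and scale the pieces so, that only the intended sub-rectangles make contact there, and do so simultaneously for both slots and for the protruding corners of $R_1$ and $R_n$, all while hitting the prescribed areas. Establishing that these constraints are jointly satisfiable is the one genuinely new ingredient; everything else is a direct adaptation of the proof of Lemma~\ref{lem:stair}.
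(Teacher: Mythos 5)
Your proposal is correct and follows essentially the same route as the paper: split at the third vertex $x$ of the inner face containing $(1,n)$, case on $i<x$, $i=x$, $i>x$, recurse with the pivot on the side containing it while giving the other side a forward or reversed staircase, and assemble around $R_1,R_x,R_n$ with the reversed piece placed so that its bottom-right corners face open space. The only cosmetic difference is that the paper produces the reversed staircase by rotating a \textbf{Staircase} layout $90^{\circ}$ clockwise rather than as the $i=1$ instance of the same induction, and it settles your ``main obstacle'' exactly as you suggest --- by letting the right side of $R_x$ (resp.\ $R_n$) extend past the other and choosing the free side lengths of $R_1,R_x,R_n$ large enough to contain the scaled sub-layouts without covering any exposed corner.
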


\begin{proof}
We again compute $\Gamma$ recursively. Constructing $\Gamma$ is easy when $G$ is a single
 edge $(1,n)$. We thus assume that $G$ has at least 3 vertices. Let $x$ be the (unique)
 third vertex on the inner face that is adjacent to $(1,n)$. Define the two graphs $G[1,x]$
 and $G[x,n]$ as in the proof of Lemma~\ref{lem:stair}; see also Fig.~\ref{fig:double-stair}(a).


If $x>i$, then recursively draw $G[1,x]$ and remove the rectangles for $1$ and $x$ from it;
 call the result $\Gamma_1$. Draw $G[x,n]$ using the \textbf{Staircase Algorithm}. Now draw
 three mutually touching rectangles $R_1$, $R_x$ and $R_n$ for $1$, $x$ and $n$, respectively
 with necessary areas such that the right side of $R_1$ touches both $R_x$ and $R_n$ and
 the right side of $R_x$ has greater $x$-coordinate than the right side of $R_n$; see
 Fig.~\ref{fig:double-stair}(b). Finally place $\Gamma_1$ (after possible scaling) touching
 the right side of $R_1$ and the top side of $R_x$ such that the right side of the rectangle
 for $(x-1)$ extends past $R_x$. Also place $\Gamma_2$ (after $90^{\circ}$ clockwise rotation
 and possible scaling) touching the bottom side of $R_x$ and the right side of $R_n$ to
 complete the drawing (the width of $R_x$ can be chosen long enough so that $\Gamma_2$
 can be contained between the bottom side of $R_x$ and the right side of $R_n$).

 On the other hand if $x=i$ we follow almost the same procedure as in the previous paragraph.
	However, instead of the drawing of $G[1,x]$ recursively, we compute it by the \textbf{Staircase
	Algorithm} and then delete from it $1$ and $x$ to obtain $\Gamma_1$. We compute $\Gamma_2$
	as in the previous section. We also draw $R_1$, $R_x$ and $R_n$ in the same way. Then we
	place $\Gamma_1$ (after possible scaling) touching the right side of $R_1$ and the top
	side of $R_x$ (again the width of $R_x$ is chosen long enough so that this can be done).
	We also place $\Gamma_2$ as the same manner as before to complete the drawing; see
	Fig.~\ref{fig:double-stair}(c).

	Finally if $x<i$, then we draw $G[1,x]$ by the \textbf{Staircase Algorithm} and delete
	from it $1$ and $x$ to obtain $\Gamma_1$. However, to compute $\Gamma_2$, we recursively
	draw $G[x,n]$ and delete $x$ and $n$ from it. We now draw $R_1$, $R_x$ and $R_n$ as
	before but this time the right side of $R_n$ should extend past $R_x$. We now place
	$\Gamma_1$ (after possible scaling) touching the right side of $R_1$ and the top side
	of $R_x$ (this is again possible for suitable choice of the height of $R_1$). Finally
	we complete the drawing by placing $\Gamma_2$ (after possible scaling) touching the
	right side of $R_x$ and the top side of $R_n$ so that the right side of the rectangle
	for $n-1$ extends past $R_n$; see Fig.~\ref{fig:double-stair}(d).
\end{proof}

Note that in the layout obtained above the top-right corners for vertices $\{1, \ldots, i\}$ and the bottom-right corners for vertices $\{i+1, \ldots, n\}$ form two staircases.
%
 Thus we refer to this as a \textbf{Double-Staircase} layout,  to the algorithm as the \textbf{Double-Staircase Algorithm}, and to vertex $i$ as the \textit{pivot vertex}.

Let $O$ be a maximal outerplanar graph and let $\Gamma$ be either a \textbf{Staircase} or
 a \textbf{Double-Staircase} layout. Then any triangle $\{p,q,r\}$ in $O$ is represented by three rectangles and the shared boundaries of these rectangles define a \textit{T-shape}. The vertex whose two shared boundaries are  collinear in the T-shape is called the \textit{pole} of the triangle $\{p,q,r\}$.


\smallskip\noindent
\textbf{Proof of Theorem~\ref{th:nested-prop}.}
	Let $\mathcal{T}$ be the construction tree for $G$. We compute a
	representation for $G$ by a top-down traversal of $\mathcal{T}$, constructing the
	representation for each piece as we traverse it. Let $P$ be a piece of $G$ at the
	$l$-th level. If $P$ is the root of $\mathcal{T}$, then we use the
	\textbf{Staircase Algorithm} to find a contact representation of $P$ with rectangles
	in the plane and then we give necessary heights to these rectangles to obtain a proportional
	contact representation of $P$ with boxes.
	Otherwise, the vertices of $P$ are adjacent to exactly three $(l-1)$-level vertices
	$A$, $B$, $C$ that form a triangle in the parent piece of $P$.
	Since $A$, $B$, $C$ belong to the parent piece of $P$, their boxes have already been
	drawn when we start to draw $P$. To find a correct
	representation of $G$, we need that the boxes for the vertices in $P$ have correct
	adjacencies with the boxes for
	$A$, $B$, and $C$; hence we assume a fixed structure for such a triangle.
	We maintain the following invariant:

	\textit{Let $\{p,q,r\}$ be three vertices in a piece $P$ of $G$ forming a triangle.
	 Then in the proportional contact representation of $P$, the boxes for $p$, $q$, $r$
	 are drawn in such a way that (i) the projection of the mutually shared boundaries for
	 these boxes in the $xy$-plane forms a T-shape, (ii) the highest faces (faces
	 with largest $z$-coordinate) of the three rectangles have different $z$ coordinates
	and the highest face of the pole-vertex of the triangle has the smallest
	 $z$-coordinate.}

	Note that by choosing the areas of the rectangles in the \textbf{Staircase} layout,
	we can maintain this invariant for the parent piece by appropriately adjusting the heights of the boxes(e.g., incrementally increasing heights for the vertices in the
	recursive \textbf{Staircase Algorithm}).

	We now describe the construction of a proportional box-contact representation of $P$
	with the correct adjacencies for $A$, $B$ and $C$.
	By the invariant the projection of the shared boundaries for $\{A,B,C\}$ forms a T-shape
	in the $xy$-plane. 
Without loss of generality assume that $A$ is the pole of the triangle and the highest faces
	of $B$, $C$ and $A$ are in this order according to decreasing $z$-coordinates.
	Also assume that $P$ is a maximal outerplanar graph; we later argue
	that this assumption is not necessary.
	
Let $ab$ be a common neighbor of $A$ and $B$; $bc$ a common neighbor of $B$ and $C$;
	$ca$ a common neighbor of $C$ and $A$. Then the outer cycle of $P$ can be partitioned
	into three paths: $P_a$ is the path from $ca$ to $ab$, $P_b$ is the path from $ab$ to
	$bc$ and $P_c$ is the path from $bc$ to $ca$. All the vertices on the path
	$P_a$ ($P_b$, $P_c$, respectively) are adjacent to $A$ ($B$, $C$, respectively). We
	first assume that there is no chord in $P$ between $ca$ and a vertex on path $P_a$.
	We consider the following two cases.

	\smallskip\noindent
	\textbf{Case 1: No vertex of $P$ is adjacent to all of $\{A,B,C\}$.} We label the vertices
	of $P$ in the clockwise order starting from $ca=1$ and ending at $n$, where $n$ is the
	number of vertices in $P$. Let $i$ and $j$ be the indices of vertices $bc$ and $ab$,
	respectively. Let $x$ be the index of the vertex that is the (unique) third vertex of
	the inner face of $P$ containing the edge $(1,n)$. Define the two graphs $G[1,x]$ and
	$G[x,n]$ as in the proof of Lemma~\ref{lem:stair}. We first find a proportional contact
	representation of $P$ for $w$ restricted to the vertices of $P$
	using rectangles in the plane, then we give necessary heights to this rectangles.
	Draw $G[x,n]$ using the \textbf{Staircase Algorithm} and delete the rectangles for $x$ and
	$n$ to obtain $\Gamma_2$. Draw rectangles $R_x$ and $R_n$ for $x$ and $n$,
	respectively, so that the bottom side of $R_x$ touches the top side of $R_n$, the left
	sides for both the rectangles have the same $x$-coordinate and and the right side of
	$R_n$ extends past $R_x$. Now place $\Gamma_2$ (after possible scaling) touching the
	right side of $R_x$ and the top side of $R_n$ (this is possible since we can make the
	width of $R_n$ sufficiently long); see Fig.~\ref{fig:prop}. Place the rectangle $R_1$
	for $1$ touching the left sides of $R_x$ and $R_n$ such that its bottom side is aligned
	with $R_n$ and its top side is aligned with the top side of the rectangle for $j$.
	To complete the rest of the drawing, we have the following two subcases:
	
	\textbf{Case 1a: $x\le i$.} We draw $G[1,x]$ using the \textbf{Staircase Algorithm} and
	delete from it the rectangles for $1$ and $x$ to obtain $\Gamma_1$. We finally place
	$\Gamma_1$ (after $90^{\circ}$ counterclockwise rotation and possible scaling) touching
	the top side of $R_1$ and left side of $R_x$ (this is possible by choosing a sufficiently
	large height for $R_x$); see Fig.~\ref{fig:prop}(a).

	\textbf{Case 1b: $x>i$.} We draw $G[1,x]$ using the \textbf{Double-Staircase Algorithm}
	where $i$ is the pivot vertex. From this drawing, we delete the rectangles for $1$
and $x$ to obtain $\Gamma_1$. Finally place $\Gamma_1$ (after $90^{\circ}$ counterclockwise
	rotation and possible scaling) touching the top side of $R_1$ and left side of $R_x$ such
	that the topside of the rectangle for $(x-1)$ goes past the top side of $R_x$; see Fig.~\ref{fig:prop}(b).

\begin{figure}[htbp]
\vspace{-1cm}
	\centering
	\includegraphics[width=0.88\textwidth]{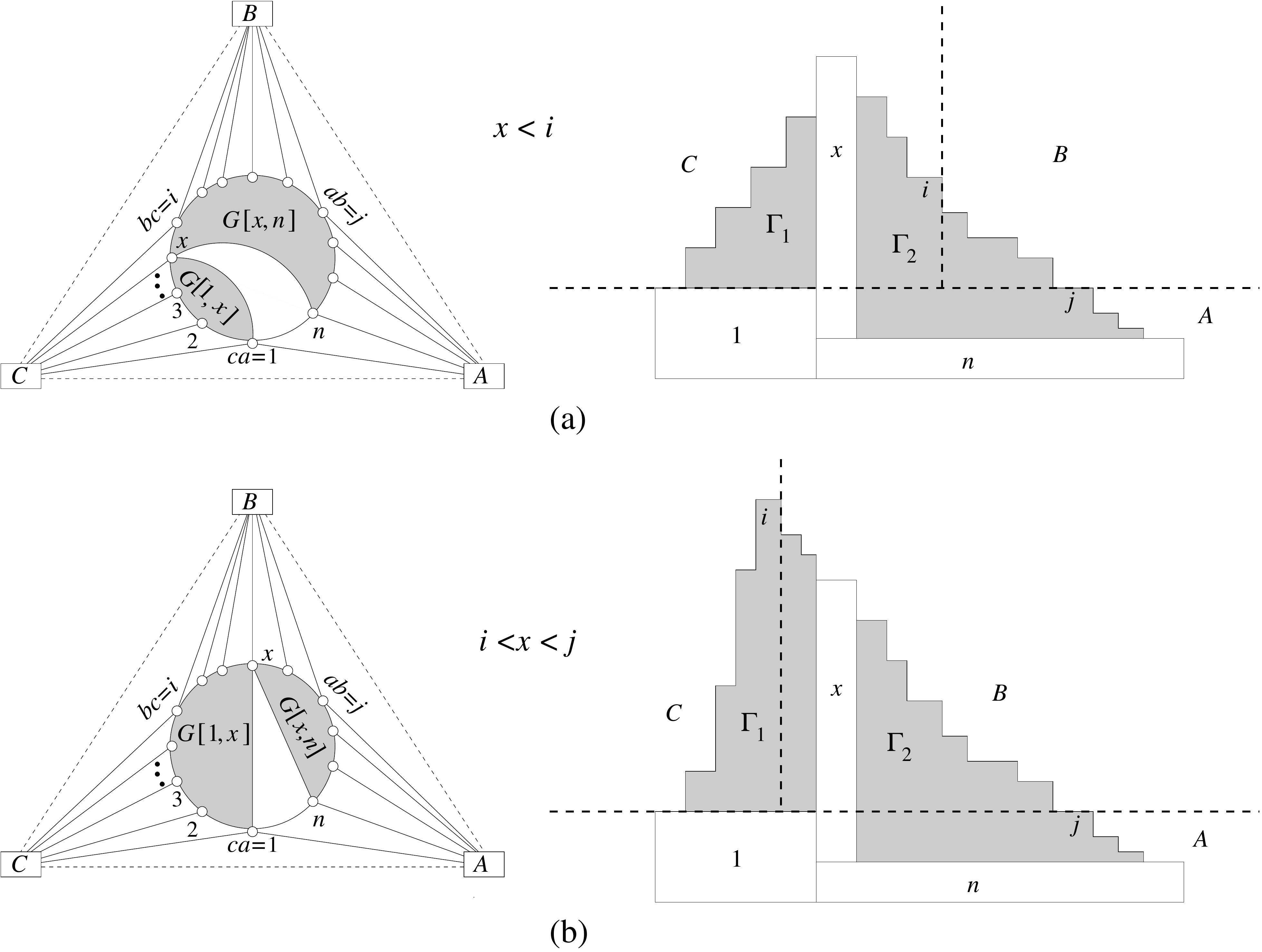}
	\includegraphics[width=0.88\textwidth]{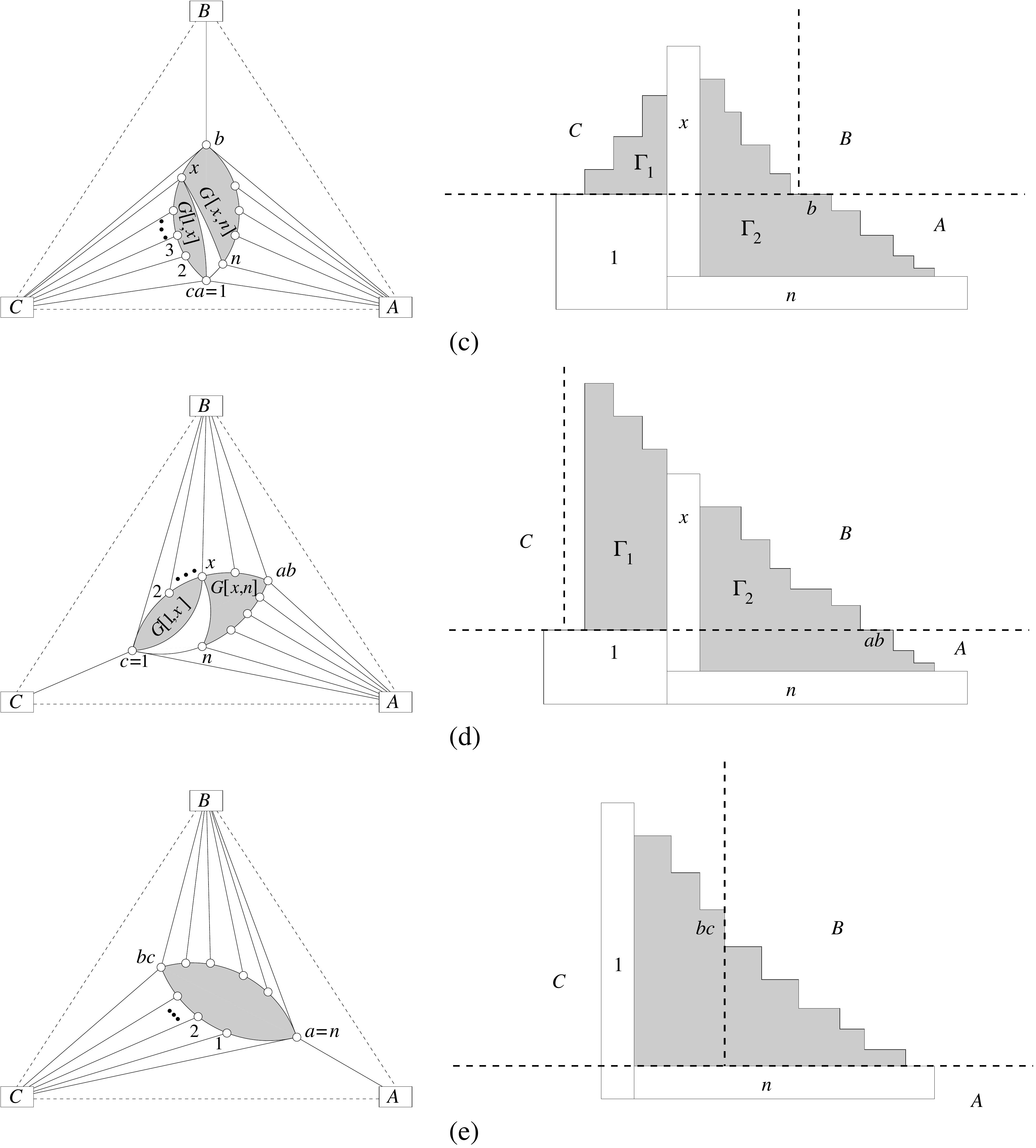}
	\caption{Illustration for Theorem~\ref{th:nested-prop}.
	Construction of the representation for a piece $P$ of $G$,
	when (a)--(b) no vertex of $P$ is adjacent to all of $\{A,B,C\}$, and
	(c)--(e) a vertex of $P$ is adjacent to all of $\{A,B,C\}$.}
\label{fig:prop}
\end{figure}

So far we used the function $w$ to assign areas for the rectangles and obtained proportional box-contact representation of $P$ from the rectangles by assigning unit heights. However, by changing the areas for the
	rectangles, we can obtain different heights for the boxes. We will use this property to maintain adjacencies with $\{A,B,C\}$, as well as to maintain the invariant. Specifically, once we get the box
	representation of $P$, we scale it 
by increasing the heights for the boxes, so that when we place it at the corner created by
	the T-shape for $\{A,B,C\}$ it will not intersect the representation for
	any of its sibling pieces in $\mathcal{T}$. Consider the point $p$ which is the intersection
	of the lines containing the right side of the rectangle for $i$ and the top side of the
	rectangle for $j$. We place $\Gamma$ such that the point $p$ superimposes
	on the corner for the T-shape in the projection on the $xy$-plane. Since the highest faces
	of $B$, $C$ and $A$ are in this order according to $z$-coordinate,
	the adjacencies of the vertices in $P$ with $\{A,B,C\}$ are correct. By appropriately choosing the areas
	for the rectangles, we ensure that all the boxes for the vertices of $P$
	have their highest faces above that of $B$ and that the invariant is maintained.


	\noindent
	\textbf{Case 2: A vertex of $P$ is adjacent to all of $\{A,B,C\}$.} In this case at
	least one of $\{A,B,C\}$ has only one neighbor in $P$. Assume first that a vertex $b$
	($=ab=bc$) of $P$ is adjacent to all of $\{A,B,C\}$ and this is the only neighbor of $B$;
	see Fig.~\ref{fig:prop}(c).
 Then we follow the steps for \textit{Case 1a}
	with $b=j$ (and some vertex between $x$ and $b$ as $i$). But when we finally
	place this representation of $P$ on the corner for the T-shape of $\{A,B,C\}$ we find
	the point $p$ to superimpose on this corner as follows. The point $p$ is on the line
	containing the top side of the rectangle for $b$ and has $x$-coordinate between the
	right sides of the rectangles for $b$ and $(b-1)$.
	
	If a vertex $c$ ($=bc=ca$) is adjacent to all of $\{A,B,C\}$ and is
	the only neighbor of $C$ in $P$, then we follow the steps of \textit{Case 1b} with $i=2$
	and $j=ab$; see Fig.~\ref{fig:prop}(d).
 We find the point $p$ to superimpose on
	the corner for the T-shape of $\{A,B,C\}$ as follows. The point $p$ is on the line
	containing the top side of the rectangle for $j=ab$ and has $x$-coordinate between
	the left sides of the rectangles for $1$ and $2$.

	If a vertex $a$ ($=ab=ca$) is adjacent to all of $\{A,B,C\}$ and is
	the only neighbor of $A$ in $P$, then we number the vertices of $P$ in the clockwise
	order starting from the clockwise neighbor of $a$ and ending at $a=n$;
	see Fig.~\ref{fig:prop}(e).
	 We use the
	\textbf{Staircase Algorithm} to find a representation of $P$ with
	rectangles and give necessary heights to obtain a representation
	with boxes. On the corner for the T-shape of $\{A,B,C\}$, we superimpose the
	intersection point for the lines containing the top side of the rectangle of $n$ and the
	right side of the rectangle for $bc$.

	Finally, we consider the case when there is a chord between $ca$ and another vertex on the
	path $P_a$. Take the innermost such chord and let its other end-vertex be $t$. Then consider the two subgraphs $P_1$
	and $P_2$ induced by all the vertices outside the chord and inside the chord (along with
	the two vertices $ca$ and $t$). $P_1$ does not contain any chord from $ca$;
	thus we use the algorithm above 
to obtain a
	representation of $P_1$; denote this by $\Gamma'$. In this
	representation $ca$ and $t$ will play the roles of $1$ and $n$, respectively.
	Each vertex of $P_2$ is adjacent to $A$ and we find a proportional contact
	representation of $P_2$ and attach it with $\Gamma'$ as follows. We use the \textbf{Staircase
	Algorithm} to find a proportional contact representation of $P_2$ with rectangles in
	the plane and delete the rectangles for $ca$ and $t$ from it to obtain $\Gamma''$.
	In $\Gamma'$, we change the height of the rectangle $R_1$ for $ca=1$ to increase
	its area so that its bottom side extends past the bottom side of the rectangle $R_n$
	for $t=n$. Then we place $\Gamma''$ (after reflecting with respect to the $x$-axis
	and possible scaling) touching the right side of $R_1$ and the bottom side of $R_n$.
	Since the \textbf{Staircase Algorithm} can accommodate any given area for the layout,
	we can change the heights of the boxes for the vertices in $P_2$ to maintain the invariant.

	Thus with the top-down traversal of $\mathcal{T}$, we obtain a proportional contact
	representation for $O$. We assumed that each piece of $O$ is maximal outerplanar.
	However in the contact representation, for each edge $(u,v)$, either
	a face of the box $R_u$ for $u$ is adjacent to the box $R_v$ for $v$ and no other box;
	or a face of $R_v$ is adjacent to $R_u$ and no other box. In both cases the adjacency
	between these two coxes can be removed without affecting any other adjacency. Thus this
	algorithm holds for any nested outerplanar graph $O$. \qed

\section{Conclusions and Future Work}

We proved that nested maximal outerplanar graphs have cube-contact
 representations and nested outerplanar graphs have proportional box-contact representations.
 These classes of graphs are special cases of $k$-outerplanar graphs, and the set of
 $k$-outerplanar graphs for all $k>0$ is equivalent to the class of all planar graphs. 
Even though our approach might generalize to large classes, 
cube-contact representations and proportional box-contact representations are still open for general planar graphs.

\medskip
\noindent{\bf Acknowledgments:} We thank 
 Therese Biedl, Steve Chaplick, Stefan Felsner, and Torsten Ueckerdt for discussions about this problem.

{
\begin{small}


\begin{thebibliography}{10}
\providecommand{\url}[1]{\texttt{#1}}
\providecommand{\urlprefix}{URL }

\bibitem{AKLPV14}
Alam, M.J., Kobourov, S.G., Liotta, G., Pupyrev, S., Veeramoni, S.: {3D}
  proportional contact representations of graphs. In: Bourbakis, N.G.,
  Tsihrintzis, G.A., Virvou, M. (eds.) Information, Intelligence, Systems and
  Applications. pp. 27--32. {IEEE} (2014)

\bibitem{ourAlg13}
Alam, M.J., Biedl, T.C., Felsner, S., Gerasch, A., Kaufmann, M., Kobourov,
  S.G.: Linear-time algorithms for hole-free rectilinear proportional contact
  graph representations. Algorithmica  67(1),  3--22 (2013)

\bibitem{BE09}
Biedl, T.C., Vel{\'{a}}zquez, L.E.R.: Drawing planar 3-trees with given face
  areas. Computational Geometry: Theory and Applications  46(3),  276--285
  (2013)

\bibitem{BR11}
Biedl, T.C., Vel{\'{a}}zquez, L.E.R.: Orthogonal cartograms with at most 12
  corners per face. Computational Geometry: Theory and Applications  47(2),
  282--294 (2014)

\bibitem{BEF+12}
Bremner, D., Evans, W.S., Frati, F., Heyer, L.J., Kobourov, S.G., Lenhart,
  W.J., Liotta, G., Rappaport, D., Whitesides, S.: On representing graphs by
  touching cuboids. In: Didimo, W., Patrignani, M. (eds.) Graph Drawing. LNCS,
  vol. 7704, pp. 187--198. Springer (2012)

\bibitem{BGPV08}
Buchsbaum, A.L., Gansner, E.R., Procopiuc, C.M., Venkatasubramanian, S.:
  Rectangular layouts and contact graphs. ACM Transactions on Algorithms  4(1)
  (2008)

\bibitem{GHKK10}
Duncan, C.A., Gansner, E.R., Hu, Y.F., Kaufmann, M., Kobourov, S.G.: Optimal
  polygonal representation of planar graphs. Algorithmica  63(3),  672--691
  (2012)

\bibitem{EMSV12}
Eppstein, D., Mumford, E., Speckmann, B., Verbeek, K.: Area-universal and
  constrained rectangular layouts. SIAM Journal on Computing  41(3),  537--564
  (2012)

\bibitem{EFK+13}
Evans, W., Felsner, S., Kaufmann, M., Kobourov, S., Mondal, D., Nishat, R.,
  Verbeek, K.: Table cartograms. In: Bodlaender, H.L., Italiano, G.F. (eds.)
  European Symposium on Algorithms. LNCS, vol. 8125, pp. 421--432. Springer
  (2013)

\bibitem{FF11}
Felsner, S., Francis, M.C.: Contact representations of planar graphs with
  cubes. In: Hurtado, F., van Kreveld, M.J. (eds.) Symposium on Computational
  Geometry. pp. 315--320 (2011)

\bibitem{FM94}
de~Fraysseix, H., de~Mendez, P.O.: Regular orientations, arboricity, and
  augmentation. In: Graph Drawing. LNCS, vol. 894, pp. 111--118. Springer
  (1995)

\bibitem{Fusy09}
Fusy, {\'E}.: Transversal structures on triangulations: A combinatorial study
  and straight-line drawings. Discrete Mathematics  309(7),  1870--1894 (2009)

\bibitem{Koebe36}
Koebe, P.: {K}ontaktprobleme der konformen {A}bbildung. Berichte {\"u}ber die
  Verhandlungen der S{\"a}chsischen Akad. der Wissenschaften zu Leipzig.
  Math.-Phys. Klasse  88,  141--164 (1936)

\bibitem{KK85}
Kozminski, K., Kinnen, E.: Rectangular duals of planar graphs. Networks  15(2),
   145--157 (1985)

\bibitem{KS07}
van Kreveld, M.J., Speckmann, B.: On rectangular cartograms. Computational
  Geometry  37(3),  175--187 (2007)

\bibitem{MNRA10}
Mondal, D., Nishat, R.I., Rahman, M.S., Alam, M.J.: Minimum-area drawings of
  plane 3-trees. Journal of Graph Algorithms and Applications  15(2),  177--204
  (2011)

\bibitem{Thom88}
Thomassen, C.: Interval representations of planar graphs. Journal of
  Combinatorial Theory, Series B  40(1),  9--20 (1988)

\bibitem{Ungar53}
Ungar, P.: On diagrams representing maps. Journal of the London Mathematical
  Society  28,  336--342 (1953)

\bibitem{YS93}
Yeap, K.H., Sarrafzadeh, M.: Floor-planning by graph dualization: 2-concave
  rectilinear modules. SIAM Journal on Computing  22,  500--526 (1993)

\end{thebibliography}
\end{small}
}


\end{document}